\newtheorem{definition}{Definition}
\newtheorem{Theorem}{Theorem}[section]
\newtheorem{Lemma}[Theorem]{Lemma}
\newtheorem{corollary}[Theorem]{Corollary}
\newtheorem{assumption}[Theorem]{Assumption}
\newtheorem{prop}[Theorem]{Proposition}
\theoremstyle{remark}
\newtheorem{remark}[Theorem]{Remark}
\newcommand{\RR}{\mathbb{R}}
\newcommand{\MM}{\mathcal{M}}
\newcommand{\NN}{\mathcal{N}}
\newcommand{\UU}{\mathcal{U}}
\newcommand{\OO}{\mathcal{O}}
\newcommand{\GG}{\mathcal{G}}
\newcommand{\hrho}{h\text{-}\rho\text{-}\delta}
\newcommand{\defeq}{\stackrel{\textrm{def}}{=}}
\newcommand{\eqdef}{=\vcentcolon}
\newcommand{\abs}[1]{\left\vert #1 \right\vert}
\newcommand{\norm}[1]{\left\Vert #1 \right\Vert}
\newcommand{\pdev}[2]{\frac{\partial #1}{\partial #2}}
\newcommand{\rpm}{\raisebox{.2ex}{$\scriptstyle\pm$}}
\newcommand{\eps}{\varepsilon}
\DeclareMathOperator*{\argmin}{\arg\!\min}
\begin{document}
\title{Manifold Approximation by Moving Least-Squares Projection (MMLS)}
\author{Barak Sober~~David Levin \\
\small{School of Mathematical Sciences, Tel Aviv University, Israel}
}

\maketitle
\begin{abstract}
In order to avoid the curse of dimensionality, frequently encountered in Big Data analysis, there was vast development in the field of linear and nonlinear dimension reduction techniques in recent years. These techniques (sometimes referred to as manifold learning) assume that the scattered input data is lying on a lower-dimensional manifold, thus the high dimensionality problem can be overcome by learning the lower dimensionality behavior. However, in real-life applications, data is often very noisy. In this work, we propose a method to approximate $\MM$ a $d$-dimensional $C^{m+1}$ smooth submanifold of $\RR^n$ ($d \ll n$) based upon noisy scattered data points (i.e., a data cloud). We assume that the data points are located ``near" the lower-dimensional manifold and suggest a non-linear moving least-squares projection on an approximating $d$-dimensional manifold. Under some mild assumptions, the resulting approximant is shown to be infinitely smooth and of high approximation order (i.e., $\OO(h^{m+1})$, where $h$ is the fill distance and $m$ is the degree of the local polynomial approximation). The method presented here assumes no analytic knowledge of the approximated manifold and the approximation algorithm is linear in the large dimension $n$. Furthermore, the approximating manifold can serve as a framework to perform operations directly on the high dimensional data in a computationally efficient manner. This way, the preparatory step of dimension reduction, which induces distortions to the data, can be avoided altogether.
\end{abstract}

\noindent\textbf{keywords:} Manifold learning, Manifold approximation, Moving Least-Squares, Dimension reduction, Manifold denoising

\noindent\textbf{MSC classification:} 65D99  \\
(Numerical analysis - Numerical approximation and computational geometry)

\section{Introduction}

The digital revolution in which we live has resulted in vast amounts of high dimensional data. This proliferation of knowledge inspires both the industrial and research communities to explore the underlying patterns of these information-seas. However, navigating through these resources encompasses both computational and statistical difficulties. Whereas the computational challenge is clear when dealing with Big-Data, the statistical issue is a bit more subtle. 

Apparently, data lying in very high dimensions is usually sparsely distributed - a phenomenon sometimes referred to by the name \textit{the curse of dimensionality}. Explicitly, one million data points, arbitrarily distributed in $\mathbb{R}^{100}$ is too small a data-set for data analysis. Therefore, the effectiveness of pattern recognition tools is somewhat questionable, when dealing with high dimensional data \cite{hughes1968mean, donoho2000highDimensions, bellman1957dynamic}.
However, if these million data points are assumed to be situated near a low dimensional manifold, e.g., up to six dimensions, then, in theory, we have enough data points for valuable data analysis.

One way to overcome the aforementioned obstacle is to assume that the data points are situated on a lower-dimensional manifold and apply various algorithms to learn the underlying manifold, prior to applying other analyses. In most \textit{manifold learning} algorithms, the process of learning a manifold from a point-cloud is, in fact, the process of embedding the point-cloud into a lower-dimensional Euclidean space. These procedures are sometimes called dimension reduction, which is a more suitable name. 

Perhaps the most well-known dimension reduction technique, presupposing that the data originates from a linear manifold, is the Principal Component Analysis (PCA)\cite{jolliffe2002PCA}. The PCA solves the problem of finding a projection on a linear sub-space preserving as much as possible of the data's variance. Yet, in case the relationships between the scattered data points are more complicated than that, there is no clear-cut solution. The methods used in dimension reduction can range between \cite{lee2007nonlinear}: linear or non-linear; have a continuous or discrete model; perform implicit or explicit mappings. Furthermore, the type of criterion each method tries to optimize may be completely different. For example: \textit{multidimensional scaling} methods \cite{torgerson1952MDS}, \textit{curvilinear component analysis} \cite{demartines1997curvilinear} and \textit{Isomap} \cite{tenenbaum2000isomap} aim at preserving distances (either Euclidean or geodesic, local or global) between the data points; \textit{Kernel PCA} methods aim at linearization of the manifold through using a kernel function in the scalar product \cite{scholkopf1998KPCA}; \textit{Self Organizing Maps} (SOM) aims at fitting a $d$-dimensional grid to the scattered data through minimizing distances to some prototypes \cite{von1973SOM, kohonen1982SOM, kohonen2001SOMbook, lee2007nonlinear}; \textit{General Topographic Mapping} fits a grid to the scattered data as well, through maximization of likelihood approximation \cite{bishop1996gtm, lee2007nonlinear}; \textit{Local Linear Embedding} (LLE) aims at maintaining angles between neighboring points \cite{roweis2000LLE, saul2003LLE}; \textit{Laplacian Eigenmaps} approximate an underlying manifold through eigenfunctions of the Graph Laplacian \cite{belkin2003laplacian}; \textit{Diffusion maps} use the modeling of diffusion processes and utilize Markov Chain techniques to find representation of meaningful structures \cite{coifman2006diffusion}; and \textit{Maximum Variance Unfolding} uses semi-definite programming techniques to maximize the variance of non-neighboring points \cite{weinberger2006MaximumVarianceUnfolding}. 

It is interesting to note that all of the aforementioned dimension reduction techniques aim at finding a global embedding of the data into $\RR^{\tilde{d}}$ ($\tilde{d} > d$) in a ``nearly" isometric fashion. Theoretically, a closed manifold $\MM$ of dimension $d$ can be isometrically embedded by a $C^1$ mapping into $\RR^{2d}$ due to Nash's theorem \cite{nash1954c1imbedding}. However, it is not clear how to construct such an embedding when there are merely discrete samples of the manifold, without any knowledge regarding the Riemannian metric or coordinate charts. 

Furthermore, albeit the proliferation of methods performing dimension reduction, less attention has been aimed at denoising or approximating an underlying manifold from scattered data. This pre-processing denoising step could be crucial, especially when the dimension reduction technique being utilized relies upon differential operators (e.g., eigenfunctions of the graph Laplacian). For clean samples of a manifold, a simplicial reconstruction has been suggested as early as 2002 by Freedman \cite{freedman2002efficient}. Another simplicial manifold reconstruction is presented in \cite{cheng2005manifold}, but the algorithm depends exponentially on the dimension. An elaboration and development of Freedman's method, utilizing tangential Delaunay complexes, is presented in \cite{boissonnat2014manifold}. In the latter, the algorithm is claimed to be linear in the ambient dimension. For the case of noisy samples of a manifold, there were works aiming at manifold denoising. A statistical approach relying upon graph-based diffusion process is presented in \cite{hein2006manifoldDenoising}. Another work dealing with a locally linear approximation of the manifold is presented in \cite{gong2010LLD}. 

In our work, we assume that our high dimensional data (in $\mathbb{R}^n$) lies near (or on) a low dimensional smooth manifold (or manifolds), of a known dimension $d$, with no boundary. We aim at approximating the manifold, handling noisy data, and understanding the local structure of the manifold. Our approach naturally leads to measuring distances from the manifold and to approximating functions defined over the manifold \cite{sober2017approximation}.

The main tool we use for approximating a $C^{m+1}$ smooth manifold is a non-linear Moving Least-Squares approach, generalizing the surface approximating algorithm presented in \cite{levin2004mesh}. The approximation we derive below, is based upon a local projection procedure which results in a $C^\infty$ smooth $d$-dimensional manifold (Theorem \ref{thm:ManifoldMMLS}) of approximation order $\OO(h^{m+1})$, where $h$ is the fill distance of the data cloud (Theorem \ref{thm:OrderMMLS}). Furthermore, the suggested implementation for this projection procedure is of complexity order  $\OO(n)$ (neglecting the dependency upon the lower dimension $d$). The general idea behind this projection follows from the definition of a differentiable manifold using coordinate charts, collected in a mathematical atlas. The proposed mechanism takes this concept to its fullest extent and involves the construction of a different local coordinate chart for each point on the manifold.

It is worth noting that throughout the article, we use the term smooth manifold to address a submanifold in $\RR^n$, which is smooth with respect to the smoothness structure of $\RR^n$. Explicitly, if a manifold can be considered locally as a smooth graph of a function, it is said to be smooth.

In Section \ref{sec:preliminaries}, we start the presentation by reviewing the method of moving least-squares for multivariate scattered data function approximation \cite{mclain1974drawing}, and its adaptation to the approximation of surfaces from a cloud of points\cite{levin2004mesh}. In Section \ref{sec:ManifoldMLS} we present the generalization of the projection method of \cite{levin2004mesh} to the general case of approximating a $d$-dimensional submanifold in $\mathbb{R}^n$. In Section \ref{sec:Theory} we discuss the topological dimension, the smoothness properties as well as the approximation power of the approximating manifold. We conclude by several numerical examples in Section \ref{sec:examples}.

\section{Preliminaries}
\label{sec:preliminaries}
As mentioned above, the Moving Least-Squares (MLS) method was originally designed for the purpose of smoothing and interpolating scattered data, sampled from some multivariate function \cite{mclain1974drawing, lancaster1981surfaces, nealen2004short}. 
The general idea was to utilize the Least-Squares mechanism on a local rather than a global level. 
This way, one can regress and capture the local trends of the data and better reconstruct a wider set of functions, than those described by mere polynomials.
Later, the MLS mechanism evolved to deal with the more general case of surfaces, which can be viewed as a function locally rather than globally \cite{levin2004mesh, levin1998MLSapproximation}. Accordingly, in this brief overview of the topic we shall follow the rationale of \cite{levin2004mesh} and start by presenting the problem of function approximation, continue with surface approximation and in section \ref{sec:ManifoldMLS} we generalize the MLS projection procedure for a Riemannian submanifold of $\RR^n$. 

We would like to stress upfront that throughout the article $\norm{\cdot}$ represents the standard Euclidean norm.

\subsection{MLS for function approximation}
Let $\lbrace x_i \rbrace_{i=1}^{I}$ be a set of distinct scattered points in $\mathbb{R}^d$ and let $\lbrace f(x_i) \rbrace_{i=1}^{I}$ be the corresponding sampled values of some function $f:\mathbb{R}^d \rightarrow \mathbb{R}$. Then, the $m^{th}$ degree moving least-squares approximation to $f$ at a point $ x \in \mathbb{R}^d $ is defined as $p_x(x)$ where:
\begin{equation}
\label{eq:basicMLS}
p_x = \argmin_{p \in \Pi_m^d} \sum_{i=1}^{I} (p(x_i) - f(x_i))^2 \theta(\| x - x_i\|)
,\end{equation}
$ \theta(s) $ is a non-negative weight function (rapidly decreasing as $s \rightarrow \infty$), $\| \cdot \|$ is the Euclidean norm and $\Pi_m^d$ is the space of polynomials of total degree $m$ in $\mathbb{R}^d$. We then define the MLS approximation of the function to be 
\begin{equation}
    \tilde{p}(x)\defeq p_x(x) \approx f(x)
\label{eq:MLSfunctionApproximant}\end{equation}
Notice, that the rapid decay of $\theta$ makes the polynomial approximation fit the data points on locally, and so $p_x$ would change in order to fit the local behavior of the data.
Furthermore, if $\theta(s)$ is of finite support then the approximation is made local, and if $\theta(0)=\infty$ the MLS approximation interpolates the data.

We wish to quote here two previous results regarding the approximation, presented in \cite{levin1998MLSapproximation}. In section \ref{sec:ManifoldMLS} we will prove properties extending these theorems to the general case of a $d$-dimensional Riemannian manifold residing in $\RR^n$.
\begin{Theorem}
Let $\theta(t)$ be a weight function such that $\lim_{t\rightarrow 0}\theta(t) = \infty$ and $\theta \in C^\infty$ at $t\neq 0$ (i.e., the scheme is interpolatory), and let the distribution of the data points $\lbrace x_i \rbrace_{i=1}^{I}$ be such that the problem is well conditioned (i.e., the least-squares matrix is invertible). Then the MLS approximation is a $C^\infty$ function interpolating the data points $\lbrace f(x_i) \rbrace_{i=1}^{I}$.
.\label{thm:SmoothMLSfunctions}\end{Theorem}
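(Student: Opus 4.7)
The plan is to analyze $x \mapsto p_x(x)$ through the normal equations defining $p_x$ and to exploit the linearity of the scheme in the data. Fix a polynomial basis $\{b_1,\ldots,b_K\}$ of $\Pi_m^d$, write $p_x(y) = c(x)^\top b(y)$ with $b = (b_1,\ldots,b_K)^\top$, and derive the weighted normal equations
\[
A(x) c(x) = v(x), \qquad A(x) = \sum_{i=1}^I \theta_i(x)\, b(x_i) b(x_i)^\top, \qquad v(x) = \sum_{i=1}^I \theta_i(x)\, f(x_i)\, b(x_i),
\]
where $\theta_i(x) := \theta(\norm{x-x_i})$. By linearity in the samples it suffices to study the shape functions $\phi_i(x) = \theta_i(x)\, b(x)^\top A(x)^{-1} b(x_i)$, since $p_x(x) = \sum_i \phi_i(x) f(x_i)$; both smoothness and the interpolation property of the approximant follow from $\phi_i \in C^\infty$ together with $\phi_i(x_j) = \delta_{ij}$.

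Away from the data set, each $\theta_i$ is $C^\infty$ by hypothesis on $\theta$, $A(x)$ is invertible by the well-conditioning assumption, and hence $\phi_i$ is $C^\infty$ on $\RR^d \setminus \{x_i\}_{i=1}^I$ by Cramer's rule. For the interpolation claim at a data site $x_j$, I would compare the MLS optimum at $p_x$ with the cost attained by a fixed polynomial $q_0 \in \Pi_m^d$ satisfying $q_0(x_j) = f(x_j)$ (such a $q_0$ exists since well-conditioning forces $I \geq K$). The cost at $q_0$ stays bounded as $x \to x_j$ because $q_0(x_j) = f(x_j)$ kills the only diverging weight $\theta_j(x)$; the optimal cost is then also bounded, which forces $\theta_j(x)\,(p_x(x_j) - f(x_j))^2 = O(1)$ and thus $p_x(x_j) \to f(x_j)$. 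Combined with the uniform boundedness of the coefficient vector $c(x)$ in a neighborhood of $x_j$, this yields $p_x(x) \to f(x_j)$.

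The genuinely delicate step, and the main obstacle, is smoothness \emph{at} the data sites themselves. The strategy I would pursue is a common-denominator clean-up of the singularity. Assume for concreteness $\theta(s) = s^{-2\alpha}\eta(s)$ with $\eta$ smooth and positive on $[0,\infty)$ (a standard case; other singular $\theta$ admit analogous treatment). Multiplying the normal equations by $D(x) = \prod_{k=1}^I \norm{x-x_k}^{2\alpha}$ replaces $A(x)$ and $v(x)$ with everywhere-$C^\infty$ quantities. The residual difficulty is that $D(x) A(x)$ degenerates at $x_j$ to a rank-one multiple of $b(x_j) b(x_j)^\top$. I would resolve this via a Schur-complement reduction centered at each $x_j$: separate the $i=j$ contribution from the rest, invert the dominant rank-one block explicitly, and use the well-conditioning hypothesis to show that the complementary block is smooth and non-degenerate. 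Solving the resulting smooth implicit system produces a $C^\infty$ extension of $c(x)$, and hence of $p_x(x) = c(x)^\top b(x)$, across each $x_j$; its zeroth-order value there is precisely $f(x_j)$, re-deriving the interpolation property as a corollary.
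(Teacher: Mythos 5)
A point of context first: the paper does not actually prove Theorem \ref{thm:SmoothMLSfunctions}; it is quoted from \cite{levin1998MLSapproximation}, and the only description the paper gives of that proof (inside the proof of Lemma \ref{lem:HsmoothinQ}) is that the MLS approximant is written there as a product of smoothly varying matrices. Concretely, the cited argument runs through the dual (Backus--Gilbert) form: one writes $p_x(x)=\sum_i a_i(x)f(x_i)$ where the $a_i$ minimize $\sum_i a_i^2/\theta_i(x)$ subject to reproducing $\Pi_m^d$ on the data sites. The KKT system for that quadratic program contains the diagonal matrix with entries $1/\theta_i(x)$, so the singularity of $\theta$ at a data site becomes a \emph{zero} of $1/\theta$; the whole linear system stays smooth and uniformly invertible across the $x_j$ (invertibility needs only unisolvency of the full set $\{x_i\}$), and smoothness and interpolation ($a_i(x_j)=\delta_{ij}$) drop out together. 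Your primal route through the normal equations $A(x)c(x)=v(x)$ reaches the same conclusion but must fight the blow-up of $\theta_j$ head-on, which is exactly why you need the common-denominator/Schur surgery at each $x_j$. Your off-site smoothness argument and your interpolation argument (comparison with a fixed $q_0\in\Pi_m^d$ with $q_0(x_j)=f(x_j)$) are both fine and standard.

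The gaps are all concentrated in the clean-up step. (i) $D(x)=\prod_k\norm{x-x_k}^{2\alpha}$ is $C^\infty$ at the data sites only when $\alpha$ is a positive integer, and $\eta(\norm{x-x_i})$ is $C^\infty$ at $x_i$ only when $\eta$ is a smooth function of the \emph{squared} distance; so your ``standard case'' is narrower than it looks. In fact the theorem's stated hypotheses ($\theta\to\infty$ at $0$, $C^\infty$ away from $0$) are not literally sufficient for $C^\infty$ smoothness at the nodes --- what is really needed is that $1/\theta(\norm{x-x_j})$ extend to a $C^\infty$ function of $x$ vanishing at $x_j$ (e.g.\ $\theta(t)=t^{-2\alpha}$ with integer $\alpha$). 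Both your proof and the cited one need this; you should state it explicitly rather than bury it in ``other singular $\theta$ admit analogous treatment.'' (ii) Your rank-one reduction, i.e.\ Sherman--Morrison applied to $A=\theta_j\,b(x_j)b(x_j)^\top+A_j$ with $A_j=\sum_{i\neq j}\theta_i\,b(x_i)b(x_i)^\top$, requires $A_j$ to be invertible near $x_j$, which is unisolvency of the \emph{punctured} set $\{x_i\}_{i\neq j}$ --- not implied by the stated well-conditioning of the full set. Either add this as a hypothesis or switch to the dual formulation, which only needs full rank of the full constraint matrix. Once (i) and (ii) are granted, the identity
\begin{equation*}
A(x)^{-1}\,\theta_j(x)\,b(x_j)=\Bigl(\tfrac{1}{\theta_j(x)}+b(x_j)^\top A_j(x)^{-1}b(x_j)\Bigr)^{-1}A_j(x)^{-1}b(x_j)
\end{equation*}
makes the smooth extension of $c(x)$ across $x_j$ explicit, so your strategy does close; it is simply costlier than the dual route the original proof takes.
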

The second result, dealing with the approximation order, necessitates the introduction of the following definition:
\begin{definition}
\textbf{$\hrho$ sets of fill distance $h$, density $\leq \rho$, and separation $\geq \delta$.}
Let $\Omega$ be a domain in $\RR^d$, and consider sets of data points in $\Omega$. We say that the set $X = \lbrace x_i \rbrace_{i=1}^I$ is an  $\hrho$ set if:
\begin{enumerate}
\item $h$ is the fill distance with respect to the domain $\Omega$
\begin{equation}
h = \sup_{x\in\Omega} \min_{x_i \in X} \norm{x - x_i}
\label{def:h},
\end{equation}

\item 
\begin{equation}
\#\left\lbrace X \cap \overline{B}_{qh}(y)  \right\rbrace \leq \rho \cdot q^d, ~~ q\geq 1, ~~ y \in \RR^d.
\label{def:rho}\end{equation}
Here $\# Y$ denotes the number of elements in a given set $Y$, while $\overline{B}_r(x)$ is the closed ball of radius $r$ around $x$.
\item $\exists \delta>0$ such that
\begin{equation}
\norm{x_i - x_j} \geq h \delta, ~~ 1 \leq i < j \leq I
\label{def:delta}\end{equation}
\end{enumerate}
\label{def:h-rho-delta}\end{definition}

\begin{remark}
Notice that albeit its name the ``fill distance" is not a metric nor a distance defined between objects. This name refers to the maximal gap in the data.
\end{remark}
\begin{remark}
In the original paper \cite{levin1998MLSapproximation}, the fill distance $h$ was defined slightly different. However, the two definitions are equivalent.
\end{remark}

\begin{Theorem}
Let $f$ be a function in $C^{m+1}(\Omega)$ with an $\hrho$ sample set. Then for fixed $\rho$ and $\delta$, there exists a fixed $k>0$, independent of $h$, such that the approximant given by equation \eqref{eq:basicMLS} is well conditioned (i.e., the least-squares matrix is invertible) for $\theta$ with a finite support of size $s =k h$. In addition, the approximant yields the following error bound:
\begin{equation*}
\norm{\tilde{p}(x) - f(x)}_{\Omega , \infty} <  M \cdot h^{m+1}
,\end{equation*}
where $\tilde{p}(x)$ is as defined in equation \eqref{eq:MLSfunctionApproximant}.
\label{thm:OrderMLSfunctions}\end{Theorem}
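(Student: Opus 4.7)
The plan is to combine polynomial reproduction of the MLS operator with a Taylor remainder estimate, using a rescaling argument to secure uniform conditioning. Fix $x\in\Omega$. Since \eqref{eq:basicMLS} is a weighted least-squares problem with objective quadratic in the coefficients of $p$ and linear in $\{f(x_i)\}$, its value $p_x(x)$ is a linear functional of the data, so there are coefficients $a_i(x)$ depending only on $x$, $\theta$, and the sample locations with $p_x(x)=\sum_i a_i(x)f(x_i)$. Applied to any $q\in\Pi_m^d$, the unique minimizer of \eqref{eq:basicMLS} is $q$ itself, which yields the reproduction identity $\sum_i a_i(x)q(x_i)=q(x)$ for every $q\in\Pi_m^d$.

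First I would fix $k>0$ so that, for every $x\in\Omega$, the set $X\cap\overline{B}(x,kh)$ is $\Pi_m^d$-unisolvent. Such a $k$ exists and depends only on $m$, $d$, and $\delta$: after the affine rescaling $y\mapsto(y-x)/h$ and a parallel rescaling of the weight to a fixed profile $\Theta$ supported on $[0,k]$, the rescaled sample set still has fill distance at most $1$ and separation at least $\delta$, and once $k$ exceeds a threshold $k_0(m,d,\delta)$ any such configuration in $\overline{B}(0,k)$ contains enough well-separated points to determine a polynomial of degree $m$. By the density clause the number of contributing sites is at most $\rho k^d$. Writing the normal equations for $p_x$ in rescaled monomial coordinates as $A^{T}WA\mathbf{c}=A^{T}W\mathbf{f}$, a compactness argument over the (translation-normalized) family of admissible $h$-$\rho$-$\delta$ configurations in $\overline{B}(0,k)$ supplies a uniform lower bound on $\lambda_{\min}(A^{T}WA)$. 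Solving for $\mathbf{c}$ and evaluating at $x$ expresses each $a_i(x)$ as a bounded linear combination, so $\sum_i\abs{a_i(x)}\leq C(k,m,d,\rho,\delta)$, uniformly in $x$ and $h$.

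To conclude, decompose $f(x_i)=T(x_i)+R(x_i)$, where $T$ is the degree-$m$ Taylor polynomial of $f$ at $x$ and $\abs{R(x_i)}\leq \tfrac{1}{(m+1)!}\norm{f}_{C^{m+1}(\Omega)}\norm{x-x_i}^{m+1}$. Reproduction gives $\sum_i a_i(x)T(x_i)=T(x)=f(x)$, hence
\[
p_x(x)-f(x)=\sum_i a_i(x)R(x_i),
\]
and restricting the sum to the at most $\rho k^d$ sites with $\norm{x-x_i}\leq kh$ yields $\abs{p_x(x)-f(x)}\leq C\cdot\tfrac{(kh)^{m+1}}{(m+1)!}\norm{f}_{C^{m+1}(\Omega)}\leq M h^{m+1}$, uniformly on $\Omega$.

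The hard part is the uniform lower bound on $\lambda_{\min}(A^{T}WA)$: one has to show that the $h$-$\rho$-$\delta$ hypothesis, together with the choice of $k$, rules out degeneracies of the weighted polynomial least-squares problem as $h\to 0$, which requires either a genuine compactness argument on the space of admissible weighted site configurations or a quantitative algebraic bound depending explicitly on $\rho$, $\delta$, and $k$. Once this stability bound is in hand, polynomial reproduction and the Taylor estimate assemble the error bound mechanically.
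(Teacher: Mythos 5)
This theorem is not proved in the paper at all: it is quoted verbatim as a prior result from \cite{levin1998MLSapproximation}, so there is no in-paper proof to compare against, only the cited one. Your argument reproduces the standard skeleton of that proof --- write the MLS value as a quasi-interpolant $p_x(x)=\sum_i a_i(x)f(x_i)$, establish reproduction of $\Pi_m^d$, bound the Lebesgue-type quantity $\sum_i\abs{a_i(x)}$ uniformly in $x$ and $h$, and finish with the degree-$m$ Taylor remainder --- and that skeleton is correct and assembles the $O(h^{m+1})$ bound exactly as you say. Where you genuinely diverge is in the conditioning step: the cited proof obtains the coefficient bound from the dual (minimum-norm) characterization of the MLS coefficients, minimizing $\sum_i a_i(x)^2/\theta(\norm{x-x_i})$ subject to the reproduction constraints and exhibiting an explicit admissible competitor built from a local unisolvent subset, which yields a quantitative bound in terms of $\rho$, $\delta$, $k$ and, importantly, remains meaningful for the interpolatory case $\theta(0)=\infty$ where your weight matrix $W$ in the normal equations is not finite. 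Your compactness argument on $\lambda_{\min}(A^TWA)$ over rescaled admissible configurations can be made rigorous for bounded weights (the configuration space is a finite union of compact sets on which $\lambda_{\min}$ is continuous and positive by unisolvency), but it is non-quantitative, needs care near $\partial\Omega$ where the rescaled fill-distance condition degenerates, and does not directly cover the singular interpolatory weight that Theorem \ref{thm:SmoothMLSfunctions} and the surrounding remarks contemplate; the dual formulation buys you all three at once. You correctly identified this as the hard step, and with it filled in (by either route) the rest of your argument is sound.
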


\begin{remark}
Although both Theorem \ref{thm:SmoothMLSfunctions} and Theorem \ref{thm:OrderMLSfunctions} are stated with respect to an interpolatory approximation (i.e., the weight function satisfies $\theta(0) = \infty$), the proofs articulated in \cite{levin1998MLSapproximation} are still valid taking any compactly supported non-interpolatory weight function. These proofs are based upon a representation of the solution to the minimization problem of Equation \eqref{eq:MLSfunctionApproximant} through a multiplication of smooth matrices. These matrices remain smooth even when the interpolatory condition is not met.
\label{rem:NonInterpolatoryTheta}\end{remark}
\begin{remark}
Notice that the weight function $\theta$ in the definition of the MLS for function approximation is applied on the distances in the domain. In what follows, we will apply $\theta$ on the distances between points in $\RR^n$ as we aim at approximating manifolds rather than functions. In order for us to be able to utilize Theorems \ref{thm:SmoothMLSfunctions} and \ref{thm:OrderMLSfunctions}, the distance in the weight function of equation \eqref{eq:basicMLS} should be $\theta(\norm{(x , 0) - (x_i, f(x_i))})$ instead of $\theta(\norm{x - x_i})$ (see Fig. \ref{fig:ApproximationOrderComment}). Nevertheless, as stated above, the proofs of both theorems as presented in \cite{levin1998MLSapproximation} rely on the representation of the solution to the minimization problem as a multiplication of smooth matrices. These matrices will still remain smooth after replacing the weight, as the new weighting is still smooth. Moreover, as explained in \cite{levin1998MLSapproximation} the approximation order remains the same even if the weight function is not compactly supported in case the weight function decays fast enough (e.g., by taking $\theta(r) \defeq e^{-\frac{r^2}{h^2}}$).
\end{remark}
\begin{figure}[ht]
\begin{centering}
\includegraphics[width={0.4\linewidth}]{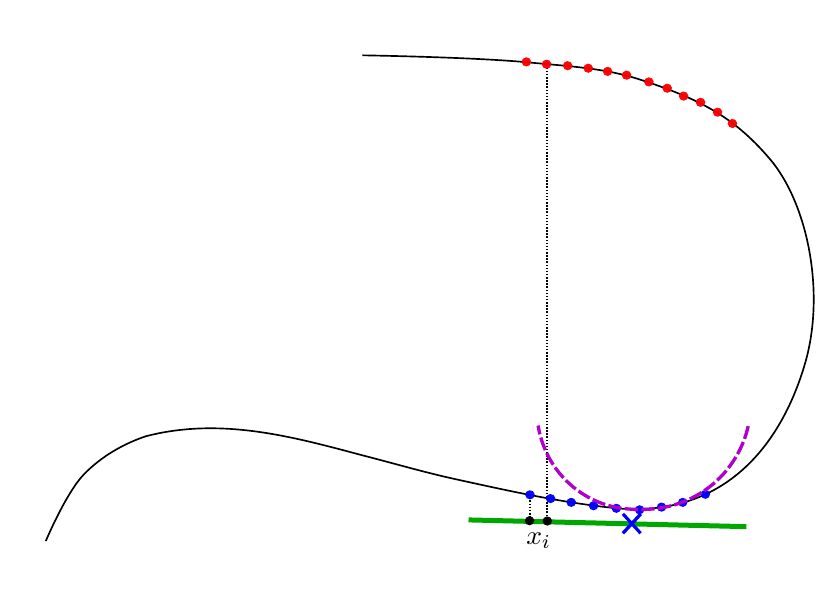}

\par\end{centering}

\caption{The effect of remote points when taking $\theta(\norm{(x , 0) - (x_i, f(x_i))})$ instead of $\theta(\norm{x - x_i})$. Assuming that the green line represents a given coordinate system around the point $ x $ (marked by the blue $ \times $), by taking the weights  $\theta(\norm{x - x_i})$ the contribution of both the red and blue samples to the weighted cost function would be $\mathcal{O}(h^{m+1})$. Alternatively, by taking $\theta(\norm{(x , 0) - (x_i, f(x_i))})$  with a fast decaying weight function the contribution of the red points would be negligible. Thus, the approximation (in purple) would fit the behavior of the blue points alone. \label{fig:ApproximationOrderComment}}

\end{figure}

\subsection{The MLS projection for surface approximation}

Following the rationale presented in \cite{levin2004mesh} let $S$ be an $n-1$ dimensional submanifold in $\RR^{n}$ (i.e., a surface), and let $\lbrace r_i \rbrace_{i=1}^{I}$ be points situated near $S$ (e.g., noisy samples of $S$). Instead of looking for a smoothing manifold, we wish to approximate the projection of points near $S$ onto a surface approximating $S$. This approximation is done without any prior knowledge or assumptions regarding $S$, and it is parametrization free.

Given a point $r$ to be projected on $S$ the projection comprises two steps: (a) finding a local approximating $n$-dimensional hyperplane to serve as the local coordinate system; (b) projection of $r$ using a local MLS approximation of $S$ over the new coordinate system. This procedure is possible since the surface can be viewed locally as a function.

\subsubsection*{The MLS projection procedure}
\textbf{Step 1 - The local approximating hyperplane}. Find a hyperplane \newline $H = \lbrace x | \langle a , x \rangle - D = 0 , x \in \mathbb{R}^n \rbrace \,,\, a \in \mathbb{R}^n \,,\, \| a \| = 1$, and a point $q$ on $H$ (i.e., $\langle a , q \rangle = D $), such that the following quantity is minimized over all $a \in \mathbb{R}^n, \| a \| = 1 , a = a(q) $ :

\begin{equation*}
I(q,a) = \sum_{i=1}^{I} (\langle a , r_i \rangle - D)^2 \theta(\| r_i - q\|) = 
\sum_{i=1}^I d(r_i , H)^2 \theta(\norm{r_i - q})
,\end{equation*}
where $\langle \cdot , \cdot \rangle$ is the standard inner product in $\mathbb{R}^n$, and $d(r_i , H)$ is the Euclidean distance between $r_i$ and the hyperplane $H$. Furthermore, $a(q)$ must be in the direction of the line that passes between $q$ and $r$, i.e.:
\begin{equation*}
(r-q) ~||~ a(q)
.\end{equation*}

\noindent\textbf{Step 2 - The MLS projection $P_m$} let $\lbrace x_i \rbrace_{i=1}^{I}$ be the orthogonal projections of the points $\lbrace r_i \rbrace_{i=1}^{I}$ onto the coordinate system defined by $H$, so that $r$ is projected to the origin. Referring to $H$ as a local coordinate system we denote the ``heights" of the points $\lbrace r_i \rbrace_{i=1}^{I}$ by \newline $f_i = \langle r_i , a \rangle - D$. We now wish to find a polynomial $p_0 \in \Pi_m^{n-1}$ minimizing the weighted least-squares error:
\begin{equation*}
p_0 = \argmin_{p \in \Pi_m^{n-1}} \sum_{i=1}^{I} (p(x_i) - f_i)^2 \theta(\| r_i - q\|)
.\end{equation*}
The projection of $r$ is then defined as
\begin{equation*}
P_m(r) \equiv q + p_0(0) a
.\end{equation*}
For an illustration of both Step 1 and Step 2 see Figure \ref{fig:MLSprojection}.

\begin{figure}[ht]
\begin{centering}
\includegraphics[width={0.4\linewidth}]{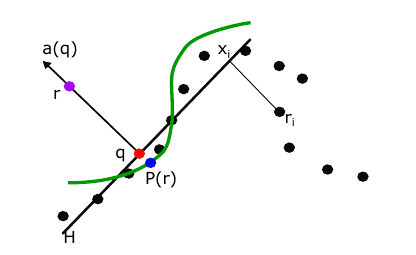}

\par\end{centering}

\caption{The MLS projection procedure. First, a local reference domain $H$ for the purple point $r$ is generated. The projection of $r$ onto $H$ defines its origin $q$ (the red point). Then, a local polynomial approximation $p_0(x)$ to the heights $f_i$ of points $r_i$ over H is computed. In both cases, the weight for each of the $r_i$ is a function of the distance to $q$ (the red point). The projection of $r$ onto $p_0$ (the blue point) is the result of the MLS projection procedure.\label{fig:MLSprojection}}

\end{figure}

As shown in \cite{levin2004mesh} the procedure described above is indeed a projection procedure (i.e., $P_m(P_m(r)) = P_m(r)$). Moreover, let $S\in C^{m+1}$ be the approximated surface and let $\tilde{S}$ be the approximating surface defined by the projection P, then it is expected that  $\tilde{S} \in C^{\infty}$ and the approximation order is $\OO(h^{m+1})$, where $h$ is the mesh size (tending to zero). The approximation order had been proven in \cite{alexa2003mesh.cont}, however, the $C^\infty$ result has not been proved prior to the current paper. In section \ref{sec:Theory} we present Theorems \ref{thm:ManifoldMMLS} and \ref{thm:OrderMMLS} which shows that the approximation is indeed a $C^\infty$ smooth manifold with approximation order of $\OO(h^{m+1})$ for a more general case. 

It is worth mentioning that the most challenging part of the algorithm is finding the approximating hyperplane (i.e., Step 1). The case is so since $a$ depends on $q$, and the weights are calculated according to the points' distance from $q$ which is a parameter to be optimized as well. It is, therefore, a non-linear problem. For the full implementation details see \cite{alexa2003mesh.cont}. An example of surface approximation performed with the MLS projection is presented in Figures \ref{fig:MLSsurface}.

\begin{figure}[ht]
\begin{centering}
\includegraphics[width={0.4\linewidth}]{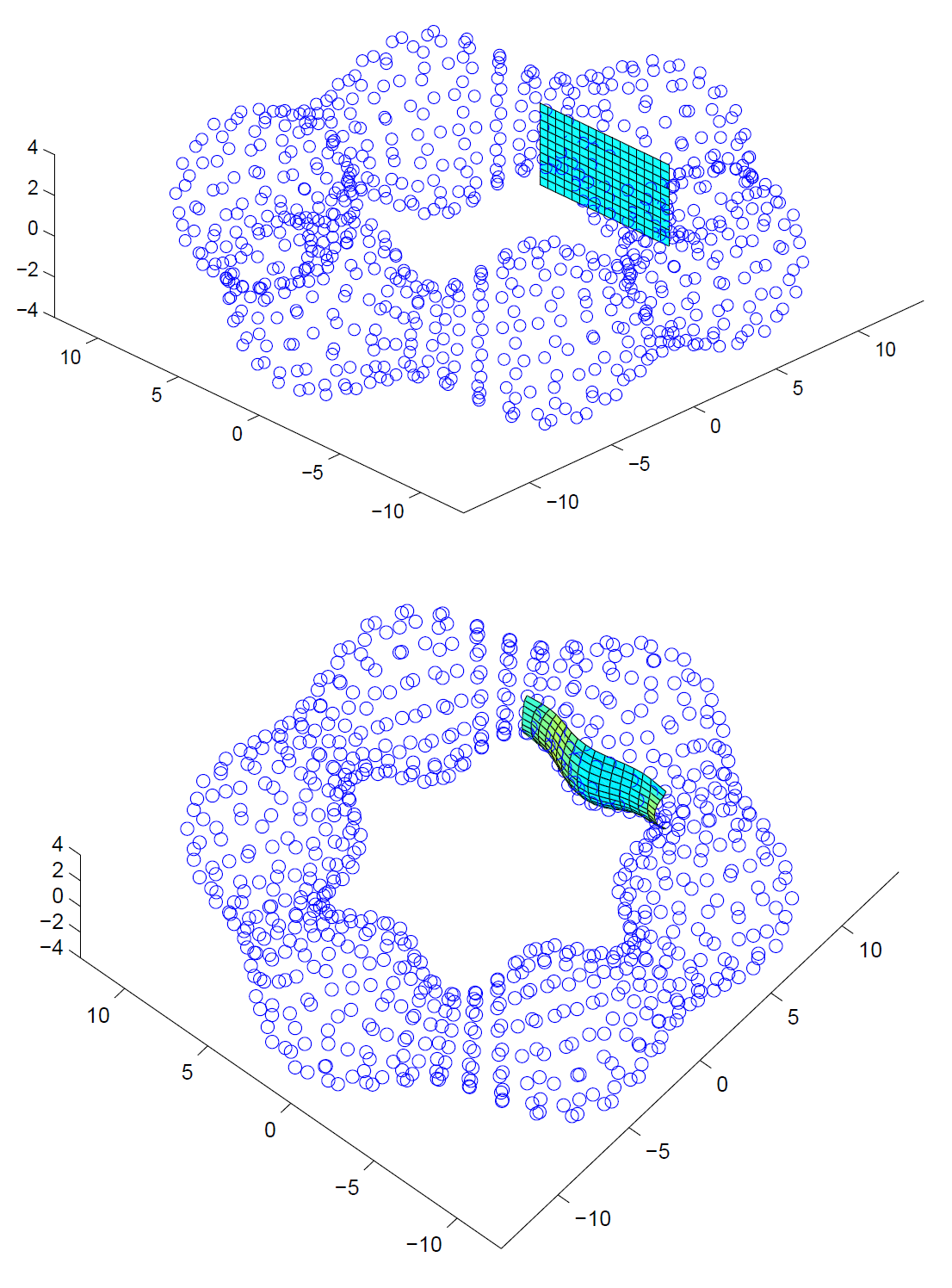}

\par\end{centering}

\caption{An example of the projection as appeared in \cite{levin2004mesh}: Upper part - data points and a plane segment $L$ near it. Lower part - the projection $P_2(L)$.\label{fig:MLSsurface}}

\end{figure}

\newpage
\section{MLS projection for manifolds (MMLS)}
\label{sec:ManifoldMLS}
The MLS procedure described in the previous section was designed for the case of unorganized scattered points in $\mathbb{R}^n$ lying near a manifold $\mathcal{M}$ of dimension $n-1$ (i.e., of co-dimension 1). Here we wish to extend the method to the more general case, where the intrinsic dimension of the manifold is $d$ (for some $d < n$). After presenting the generalized projection algorithm, we propose an implementation, whose complexity is linear in the ambient dimension $n$, and conclude with a theoretical discussion. 

\subsection{The MMLS projection}
Let $\MM$ be a manifold of dimension $d$ lying in $\RR^n$, and let the samples of $\MM$ hold the following conditions.
\subsubsection*{Clean Sampling Assumptions}
\label{sec:CleanSampling}
\begin{enumerate}
    \item $\MM\in C^2$ is a closed (i.e., compact and boundaryless) submanifold of $\RR^n$.
    \item $R = \{r_i\}_{i=1}^I\subset\MM$ is an $\hrho$ sample set with respect to the domain $\MM$ (see Definition \ref{def:h-rho-delta}).
\end{enumerate}
\subsubsection*{Noisy Sampling Assumptions}
\label{sec:NoisySampling}
\begin{enumerate}
    \item $\MM\in C^2$ is a closed (i.e., compact and boundaryless) submanifold of $\RR^n$.
    \item $\tilde R = \{\tilde r_i\}_{i=1}^I\subset\MM$ is an $\hrho$ sample set with respect to the domain $\MM$ (see Definition \ref{def:h-rho-delta}).
    \item $R = \{ r_i \}_{i=1}^I$ are noisy samples of $R$; i.e., $ r_i= \tilde r_i + n_i$.
    \item $\norm{n_i} < \sigma$
\end{enumerate}
Henceforth, whenever one of these two definition sets is met we shall state that the \textit{Clean Sampling Assumptions} or \textit{Noisy Sampling Assumptions} hold. 

Given a point $r$ near $\MM$ we define the Manifold Moving Least-Squares projection of $r$ through two sequential steps: 
\begin{itemize}
    \item[1.] Find a local $d$-dimensional affine space $H(r)$ around an origin $q(r)$ such that $H$ approximates the sampled points.
    Explicitly, $H = q + Span\{e_k\}_{k=1}^d$, where $\{e_k\}_{k=1}^d$ is some orthonormal basis of $\RR^d$. 
    $H$ will be used as a local coordinate system.
    \item[2.] Define the projection of $r$ using a local polynomial approximation $p:H \rightarrow \mathbb{R}^{n}$ of $\mathcal{M}$ over the new coordinate system. Explicitly, we denote by $x_i$ the projections of $r_i$ onto $H$ and then define the samples of a function $f$ by $f(x_i) = r_i$. Accordingly, the $d$-dimensional polynomial $p$ is an approximation of the vector valued function $f$. 
\end{itemize}

\begin{remark}
Since $\MM$ is a differentiable manifold it can be viewed locally as a function from the tangent space to $\RR^{n-d}$. It is therefore plausible to assume that we can find a coordinate system $H$ and refer to the manifold $\MM$ locally as a graph of some function $f:H\rightarrow \RR^{n-d}$ (see Lemma \ref{lem:HapproximationNoise} for a formal discussion regarding this matter).
\end{remark}
\begin{remark}
We would like the points $r$ to be projected onto a smooth $d$-dimensional manifold approximating $\MM$. In order to achieve this $H$ should depend smoothly on $r$ (see Theorem \ref{thm:SmoothCoordinates})
\end{remark}
\begin{remark}\label{rem:grassPush}
Throughout the paper, whenever we encounter an affine space
$$L = x + span\{e_k\}_{k=1}^d,$$ 
we will denote its Grassmannian counterpart (i.e., the linear space without the shift by $x$) by 
\[
\GG L = span\{e_k\}_{k=1}^d
.\]
\end{remark}
\noindent\textbf{Step 1 - The local Coordinates} \\
Find a $d$-dimensional affine space $H$, and a point $q$ on $H$, such that the following constrained problem is minimized:
\begin{equation}
   J(r; q, H) = \sum_{i=1}^{I} d(r_i , H)^2 \theta(\| r_i - q\|) 
\label{eq:Step1Minimization}
\end{equation}
under the constraints
\begin{enumerate}
\item $r-q \perp H$  \label{init_constraint:perp}
\item $q\in B_{\mu}(r)$ \label{init_constraint:search}
\item $\#\left(R\cap B_{\sigma+ h}(q)\right) \neq 0$ \label{init_constraint:proximity}
,\end{enumerate}
where $d(r_i , H)$ is the Euclidean distance between the point $r_i$ and the affine subspace $H$, $\mu$ is some fixed number (we will elaborate on it further below), $B_\eta(x)$ is an open ball of radius $\eta$ around $x$, $h$ is the fill distance from the $\hrho$ set in the sampling assumptions.

\begin{remark}
For a later use, we introduce the notation $q = q(r)$ and $H=H(r)$. 
\end{remark}

We wish to give some motivation to the definition of the minimization problem portrayed above.
Constraint \ref{init_constraint:search} limits the search space to a neighboring part of the manifold, whereas constraint \ref{init_constraint:proximity}, narrows it further to the vicinity of the samples, and, thus, voids the possibility of achieving solutions with zero value of $J$ (caused by the fact that there are no sample  points in the support of $\theta$) for an illustration see Figure \ref{fig:uniqueCircle}.
The necessity in constraint \ref{init_constraint:perp} is less obvious though. 
Minimizing $J(r; q, H)$ without this constraint will just yield a local PCA approximation around an unknown point $q$ (see the Appendix for a detailed explanation about local PCA). 
The added constraint links the approximation to the point $r$, which we aim to project onto $\MM$, as well as generalizes the idea of the Euclidean projection onto a manifold.
Explicitly, in the theoretical case, we know that if we have a point $r$ ``close enough" to a given manifold $\MM$ there exists a unique projection $P(r)$ of the point $r$ onto $\MM$.
In addition, we know that this projection maintains $r-P(r)\perp T_{P(r)}\MM$, which is echoed in constraint \ref{init_constraint:perp} described in the minimization problem of Equation \eqref{eq:Step1Minimization}.
This concept of a unique projection domain is better expressed by the definition of reach as introduced in \cite{federer1959curvature} .
\begin{definition}[Reach]
The reach of a subset $A$ of $\RR^n$, is the largest $\tau$ (possibly $\infty$) such that if $x\in\RR^n$ and the distance, $dist(A,x)$, from x to A is smaller than $\tau$, then $A$ contains
 a unique point, $P_A(x)\in A$, nearest to x. 
\end{definition}
From now on, whenever we refer to the \textbf{reach neighborhood of a manifold $\MM$} we mean:
\begin{equation}
     U_{reach} \defeq \{x\in\RR^n ~\vert~ dist(x, \MM) < rch(\MM)\}
\label{eq:ReachNeihborhood}\end{equation}
In our context, we refer to manifolds with positive reach, and we denote the reach of a manifold by $rch(\MM)$.
Accordingly, for a point $r$ in the reach neighborhood $U_{reach}$, there exists a unique projection $P_{\MM}(r)$ onto the manifold $\MM$.
As we show below in Lemma \ref{lem:HapproximationNoise}, the minimizers $q, H$ of Equation \eqref{eq:Step1Minimization} converge to $P(r), T_{P(r)}\MM$ respectively as the fill distance $h$ tends to zero (given some assumptions on the support of $\theta$) for $r$ in some neighborhood $U\subset U_{reach}$.

Therefore, we wish to generalize the concept of a reach neighborhood (relevant for the limit case) to a domain where our procedure yields a unique approximation.
In contrast to the $U_{reach}$ definition, we cannot take a neighborhood of the approximant prior to defining it.
Thus, constraint \ref{init_constraint:search} limits the search space around $r$, the point we wish to project.
This way, we avoid irrelevant and null solutions to the minimization problem.
We wish to stress that the noise level $\sigma$ in our sample set does not necessarily bound the environment within which we can solve the minimization problem.
For example in Lemma \ref{lem:HapproximationNoise} the noise level decays to zero in the order $\OO(h)$, but the uniqueness of the MMLS projection procedure is guaranteed in a neighborhood of a fixed size; explicitly, for points $r$ such that $d(r,\MM)<rch(\MM)/4$.

\begin{assumption}[Uniqueness Domain]
We assume that there exists an $\epsilon$-neighborhood of the manifold
\begin{equation*}
    U_{unique} \defeq \{x\in\RR^n ~\vert~ dist(x, \MM) < \epsilon < rch(\MM) \}
,\end{equation*} 
such that for any $r \in U_{unique}$ the minimization problem \eqref{eq:Step1Minimization} has a unique local minimum $q(r) \in B_{\mu}(r)$, for some constant $\mu < rch(\MM)/2$.
\label{eq:uniqueAssume}
\end{assumption}

Note that in order to achieve a unique solution for a given $r$ and avoid null solutions (i.e., points $q$ which has no samples in the support of $\theta$ around them) the decay of $\theta$ should be bounded from below, and $\mu$ should be large enough such that $P_\MM(r)\in B_\mu(r)$.
Figure \ref{fig:uniqueCircle} illustrates the reach neighborhood of a section of a circle restricting $r$ such that $d(r,\MM)<rch(\MM)/4$ and setting $\mu=rch(\MM)/2$.
To some extent, the circle example ``bounds" the behavior of the data in every $2d$ section of the manifold, as the reach bounds the sectional curvature.
An illustration of a uniqueness domain for a cleanly sampled curve embedded in $\RR^3$ can be seen in Figure \ref{fig:UniqueDomain}.

\begin{figure}[ht]
\begin{centering}
\includegraphics[width={0.6\linewidth}]{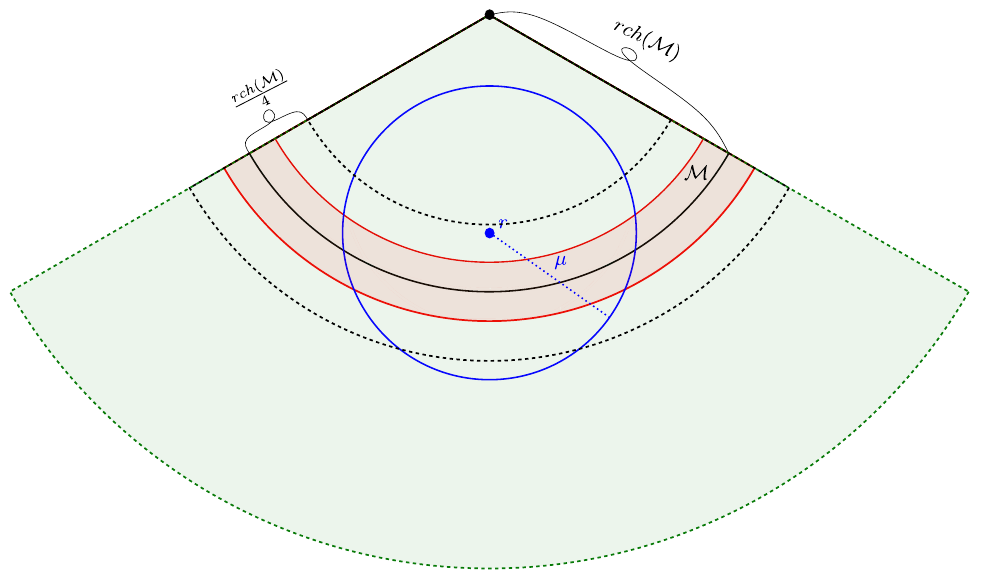}
\par\end{centering}
\caption{An illustration of a uniqueness domain on a circle section where we take $r$ such that $d(r,\MM)< rch(\MM)/4$ and set $\mu=rch(\MM)/2$. The black dot above is the center of the circle; the green region is the reach neighborhood of $\MM$; the red region is the noisy region from which we sample the manifold (i.e., the support of the distribution of sample points); the blue ball is the search region defined in constraint \ref{init_constraint:search} of Equation \eqref{eq:Step1Minimization}.  \label{fig:uniqueCircle}}
\end{figure}

\begin{figure}[ht]
\begin{centering}
\includegraphics[width={1\linewidth}]{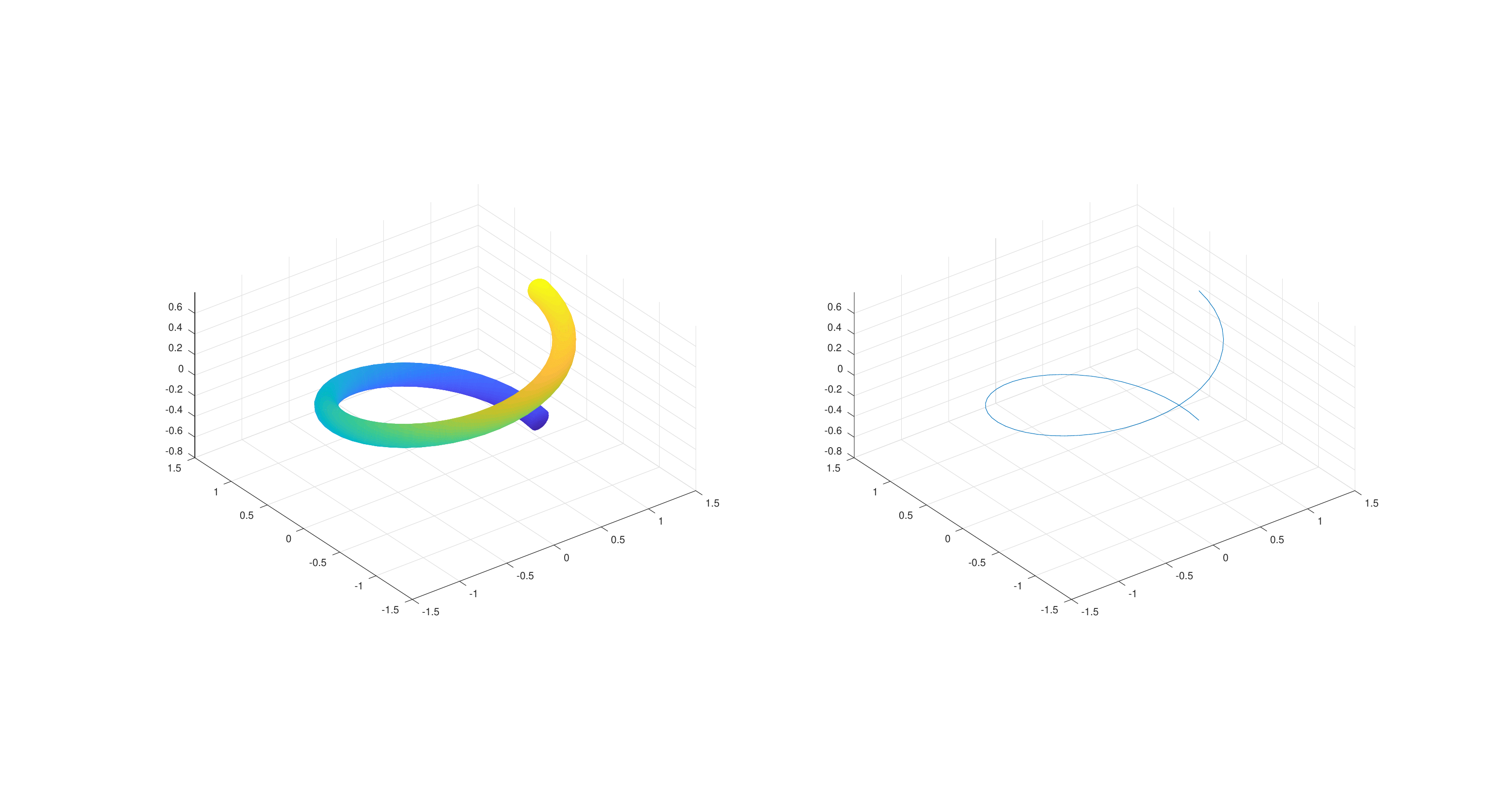}
\par\end{centering}
\caption{An illustration of a uniqueness domain. Right - a 1-dimensional manifold $\MM$ embedded in $\RR^3$. Left - a uniqueness domain $U$ of  $\MM$. \label{fig:UniqueDomain}}
\end{figure}

\vspace{5mm}\noindent\textbf{Step 2 - The MLS projection $P_m$.} Let $\lbrace e_k \rbrace_{k=1}^d$ be an orthonormal basis of $\GG H(r)$, and let $x_i$ be the orthogonal projections of $r_i$ onto $H(r)$ (i.e., $x_i = q(r) + \sum_{k=1}^d \langle r_i - q(r) , e_k \rangle e_k$). As before, we note that $r$ is orthogonally projected to the origin $q$. Now we would like to approximate $f:\RR^d \rightarrow \RR^n$, such that $f_i=f(x_i)=r_i$. The approximation of $f$ is performed by a weighted least-squares vector valued polynomial function $\vec{g}(x) = (g_1(x), ... , g_n(x))^T$ where $g_k(x) \in \Pi_m^d$ is a $d$-dimensional polynomial of total degree $m$ (for  $1 \leq k \leq n$).

\begin{equation}
\vec{g} = \argmin_{\vec{p} \in \Pi_m^{d}} \sum_{i=1}^{I} \| \vec{p}(x_i) - \vec{f}_i \|^2 \theta(\| r_i - q\|)
\label{eq:Step2}.\end{equation}
The projection $P_m(r)$ is then defined as:
\begin{equation}
P_m(r) = \vec{g}(0)
\label{eq:Step2Projection}\end{equation}

\begin{remark}
The weighted least-squares approximation is invariant to the choice of an orthonormal basis of $\RR^d$. 
\end{remark}

\begin{remark}
In fact we could have defined the second step as an approximation of a function $f:H\simeq \RR^d\rightarrow H^\perp\simeq\RR^{n-d}$.
Nevertheless, this would yield the exact same approximating object and the computational redundancy is negligible in the setting where $d\ll n$.
\end{remark}
\begin{remark}
In fact, considering each coordinate polynomial $g_k(x)$ separately we see that for all $1 \leq k \leq n$ we obtain the same system of least-squares just with different r.h.s. In other words, there is a need to invert (or factorize) the least-squares matrix only once! This fact is important for an efficient application of the implementation for high dimension $n$.
\end{remark}

\subsection{Implementation}
The implementation of Step 2 is straightforward, as this is a standard weighted least-squares problem. As opposed to that, minimizing \eqref{eq:Step1Minimization} is not a trivial task. Since the parameter $q$ appears inside the weight function $\theta$, the problem is non-linear with respect to $q$. We, therefore, propose an iterative procedure in which $q$ is updated at each iteration, and the other parameters are solved using a $d$-dimensional QR algorithm combined with a linear system solver.

\subsubsection*{Implementation of Step 1 - finding the local coordinates}
\label{alg:Step1}

We find the affine space $H$ by an iterative procedure. Assuming we have $q_j$ and $H_j$ at the $j^{th}$ iteration, we compute $H_{j+1}$ by performing a linear approximation over the coordinate system $H_j$. In view of the constraint $r-q\perp H$, we define $q_{j+1}$ as the orthogonal projection of $r$ onto $H_{j+1}$. We initiate the process by taking $q_{0} = r$ and solve a spatially weighted PCA around the point $r$ (for more details see \eqref{eq:geometricPCA} in the Appendix). This first approximation is denoted by $H_1$ and is given by the span of the first $d$ principal components $\lbrace u_k^1 \rbrace_{k=1}^d$. Thence, we compute:
\[q_1 = \sum_{k=1}^d \langle r - q_0 , u_k^1 \rangle u_k^1 + q_0 = q_0.\]
Upon obtaining $q_1 , H_1$ we continue with the iterative procedure as follows:
\begin{itemize}
\item Assuming we have $H_j , q_j$ and its respective frame $\lbrace u_k^j \rbrace_{k=1}^d$ w.r.t the origin $q_j$, we project our data points $r_i$ onto $H_j$ and denote the projections by $x_i$. Then, we find a linear approximation of the samples $f_i^j = f^j(x_i) = r_i$:
\begin{equation}\label{eq:alg_LS}
\vec{l}^j(x) = \argmin_{\vec{p} \in \Pi_1^{d}} \sum_{i=1}^{I} \| \vec{p}(x_i) - f_i^j \|^2 \theta(\| r_i - q_j\|)
.\end{equation}
Note, that this is a standard weighted linear least-squares as $q_j$ is fixed!
\item Given $\vec{l}^j(x)$ we obtain a temporary origin:
\[\tilde{q}_{j+1} = \vec{l}^j(0).\]
Then, around this temporary origin we build a basis $\hat{B} = \lbrace v_k^{j+1} \rbrace_{k=1}^d$ for $H_{j+1}$ with:
\[
v_k^{j+1} \defeq \vec{l}^j(u^j_k) - \tilde{q}_{j+1} ~~,~~ k=1,...,d
\]
We then use the basis $\hat{B}$ in order to create an orthonormal basis $B = \lbrace u_k^{j+1} \rbrace_{k=1}^d$ through a $d$-dimensional $QR$ decomposition, which costs $\OO(nd^2)$ flops. Finally we derive
\[q_{j+1} = \sum_{k=1}^d \langle r - \tilde{q}_{j+1} , u_k^{j+1} \rangle u_k^{j+1} + \tilde{q}_{j+1}.\]
This way we ensure that $r - q_{j+1} \perp H_{j+1}$.

\end{itemize}
See Figure \ref{fig:HelixLinear} for the approximated local coordinate systems $H$ obtained by Step 1 on noisy samples of a helix.

\begin{remark}
Note that a possible option for the least square minimization of Equation \eqref{eq:alg_LS} is the zero polynomial (i.e., $\vec{p} = \vec{0}$). Thus, if we reach the theoretical minimum of \eqref{eq:Step1Minimization} at some point the linear approximation step cannot yield a result better than the zero polynomial. So, the theoretical minimum is, in fact, a ``fixed point" of the procedure.
\end{remark}

\begin{algorithm}
\caption{Finding The Local Coordinate System $(H(r),q(r))$}
\label{alg:FindH}
\begin{algorithmic}[1]
\State {\bfseries Input:} $\lbrace r_i \rbrace_{i=1}^N, r, \epsilon$
\State{\bfseries Output:}\begin{tabular}[t]{ll}
                         $q$ - an $n$ dimensional vector \\
                         $U$ - an $n\times d$ matrix whose columns are $\lbrace u_j \rbrace_{j=1}^d$ 
                         \end{tabular}
                         \Comment{$H = q + Span\lbrace u_j \rbrace_{j=1}^d$}
\State define $R$ to be an $n\times N$ matrix whose columns are $r_i$
\State initialize $U$ with the first $d$ principal components of the spatially weighted PCA 
\State $q\leftarrow r$
\Repeat
    \State $q_{prev} = q$
    \State $\tilde{R} = R - repmat(q,1,N)$
    \State $\tilde{R} = \tilde{R} \cdot \Theta$ \Comment{where $\Theta = diag(\sqrt{\theta(\norm{r_1-q})}, \ldots, \sqrt{\theta(\norm{r_N-q})})$}
    \State $X_{N\times d} = \tilde{R}^T U$ \Comment{find the representation of $r_i$ in $Col(U)$}
    \State define $\tilde{X}_{N\times (d+1)} = \left[(1,...,1)^T, X\right]$
    \State solve $\tilde{X}^T\tilde{X}\alpha = \tilde{X}^T \tilde{R}^T$ for $\alpha \in M_{(d+1)\times n}$ \Comment{solving the LS minimization of $\tilde{X}\alpha \approx \tilde{R}^T$}
    \State $\tilde{q} = q + \alpha(1,:)^T$
    \State $Q, \hat{R} = qr(\alpha(2:end, :)^T)$ \Comment{where $qr$ denotes the QR decomposition}
    \State $U \leftarrow Q$
    \State $q = \tilde{q} + U U^T (r-\tilde{q})$
\Until {$\|q-q_{\text{prev}}\|<\epsilon$}
\end{algorithmic}
\end{algorithm}

\begin{algorithm}
\caption{Project $r$}
\label{alg:Projection}
\begin{algorithmic}
\State Input: $\lbrace r_i  \rbrace_{i=1}^N, r$
\State Output: $P_m(r)$
\State Build a coordinate system $H$ around $r$ using $\lbrace r_i \rbrace_{i=1}^N$ (e.g., via Algorithm \ref{alg:FindH})
\State Project each $r_i\in \RR^n$ onto $H \rightarrow x_i \in \RR^d$
\State Find the polynomial $p_r\in\Pi_m^d$ minimizing Equation \eqref{eq:Step2} using the samples $\lbrace (x_i, r_i)\rbrace_{i=1}^N$.
\State $P_m(r) \leftarrow p_r(0)$
\end{algorithmic}
\end{algorithm}

\subsubsection*{Complexity of the MMLS projection}

Since the implementation of Step 2 is straightforward, its complexity is easy to compute. The solution of the weighted least-squares for an $ m^{th} $ total degree $d$-dimensional scalar-valued polynomial, involves solving ${m+d \choose d}$ linear equations (since this is the dimension of $\Pi_m^d$), which is $\OO(d^m)$ equations for small $m$. Even though we are solving here for an $\RR^n$-valued polynomial the least-squares matrix is the same for all of the dimensions. Thus, the complexity of this step is merely $\OO(d^{3m} + n \cdot d^m)$. In addition, we need to compute the distances from the relative origin $q$ which costs $\OO(n \cdot I)$, where $I$ is the number of points. This can be reduced if we have a compactly supported weight function. Therefore, the overall complexity of the implementation of Step 2 is $\OO(n \cdot \tilde{I} + d^{3m} + n \cdot d^m)$, where $\tilde{I}$ is the number of points in the support of the weight function.

In a similar way, the complexity of each iteration of Step 1 involves $ \OO(n \cdot \tilde{I} + d^{3}) $ flops; from our experiments with the algorithm 2-3 iterations are sufficient to achieve good approximations (the entire numerical section was carried out using just 3 iterations). However, the initial guess of Step 1 involves a PCA which classically costs $\OO(n \cdot \tilde{I}^{2} )$. However, as the support should be determined such that the least-squares matrix is invertible we get that $\tilde{I} \propto \OO(d^m)$. Thus, we can use a randomized rank $d$ SVD implementation such as the one detailed in \cite{aizenbud2016SVD} and reduce the complexity of this step to $\OO(n \cdot \tilde{I} ) + \tilde{O}(n \cdot d^2)$, where $\tilde{O}$ neglects logarithmic factors of $d$. Plugging in the estimated size of $\tilde{I}$, we get that the overall complexity of Step 1 amounts to $\OO(n \cdot d^m)$

\begin{corollary}
The overall complexity for the projection of a given point $r$ onto the approximating manifold is $\OO( n \cdot d^m + d^{3m})$. Therefore, the approximation is linear in the ambient dimension $n$.
\end{corollary}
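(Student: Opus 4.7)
The plan is simply to assemble the per-step bounds that were derived in the two paragraphs immediately preceding the corollary, using the common bookkeeping fact that the effective number of active samples $\tilde{I}$ is itself $O(d^m)$. This last identity follows because the local least-squares system in Step 2 is required to be invertible, and its size is $\dim \Pi_m^d = \binom{m+d}{d} = O(d^m)$, so the weight function's support must contain at least that many points; conversely, the $h$-$\rho$-$\delta$ hypothesis lets us take the support exactly of that order. I would therefore begin by recording $\tilde{I} = O(d^m)$ as the link between the two analyses.

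Next I would bound Step 1. Each fixed-$q$ iteration is a weighted linear least-squares in $d$ coordinates over $\tilde{I}$ points in $\RR^n$ plus a $d$-dimensional QR, costing $O(n\tilde{I} + d^3)$ flops, and only a constant number of iterations is needed in practice. The PCA initialization, via the randomized rank-$d$ SVD of \cite{aizenbud2016SVD}, costs $O(n\tilde{I}) + \tilde{O}(nd^2)$. Substituting $\tilde{I}=O(d^m)$ and absorbing the $\tilde{O}(nd^2)$ term into $O(nd^m)$ (for the regimes of interest, $m\ge 2$; otherwise the bound is even sharper), Step 1 totals $O(n d^m)$.

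Then I would bound Step 2 in the same way. Forming the weighted Gram matrix of the polynomial basis involves $O(nd^m)$ operations for evaluating the $\tilde{I}=O(d^m)$ weights and inner products in $\RR^n$, factorizing it costs $O(d^{3m})$, and since the same factorization is reused for all $n$ coordinates of the vector-valued right-hand side the back-substitution adds only $O(nd^m)$. This yields $O(nd^m + d^{3m})$ for Step 2. Summing Steps 1 and 2 produces exactly $O(n d^m + d^{3m})$; since $d$ and $m$ are regarded as fixed while $n$ grows, this is linear in $n$, establishing the corollary.

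There is no real obstacle here — the corollary is essentially a tally of the costs just itemized. The only point worth flagging explicitly is the justification that $\tilde{I}$ can indeed be taken to be $\Theta(d^m)$ rather than larger, which is what makes the complexity genuinely linear in $n$ rather than quadratic through the PCA step; once this is stated the rest is summation.
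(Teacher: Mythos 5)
Your proposal is correct and follows essentially the same route as the paper: the same per-step accounting ($O(n\tilde{I}+d^3)$ per Step 1 iteration plus a randomized rank-$d$ SVD initialization, and $O(n\cdot d^m + d^{3m})$ for the shared-factorization least-squares of Step 2), glued together by the same key observation that $\tilde{I}\propto O(d^m)$ because the support of $\theta$ need only be large enough to make the local system invertible. If anything, you are slightly more explicit than the paper about why $\tilde{I}$ can be taken to be $\Theta(d^m)$ and about the small-$m$ edge case in absorbing the $\tilde{O}(nd^2)$ term.
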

\begin{figure}[ht]
\begin{centering}
\includegraphics[width={0.6\linewidth}]{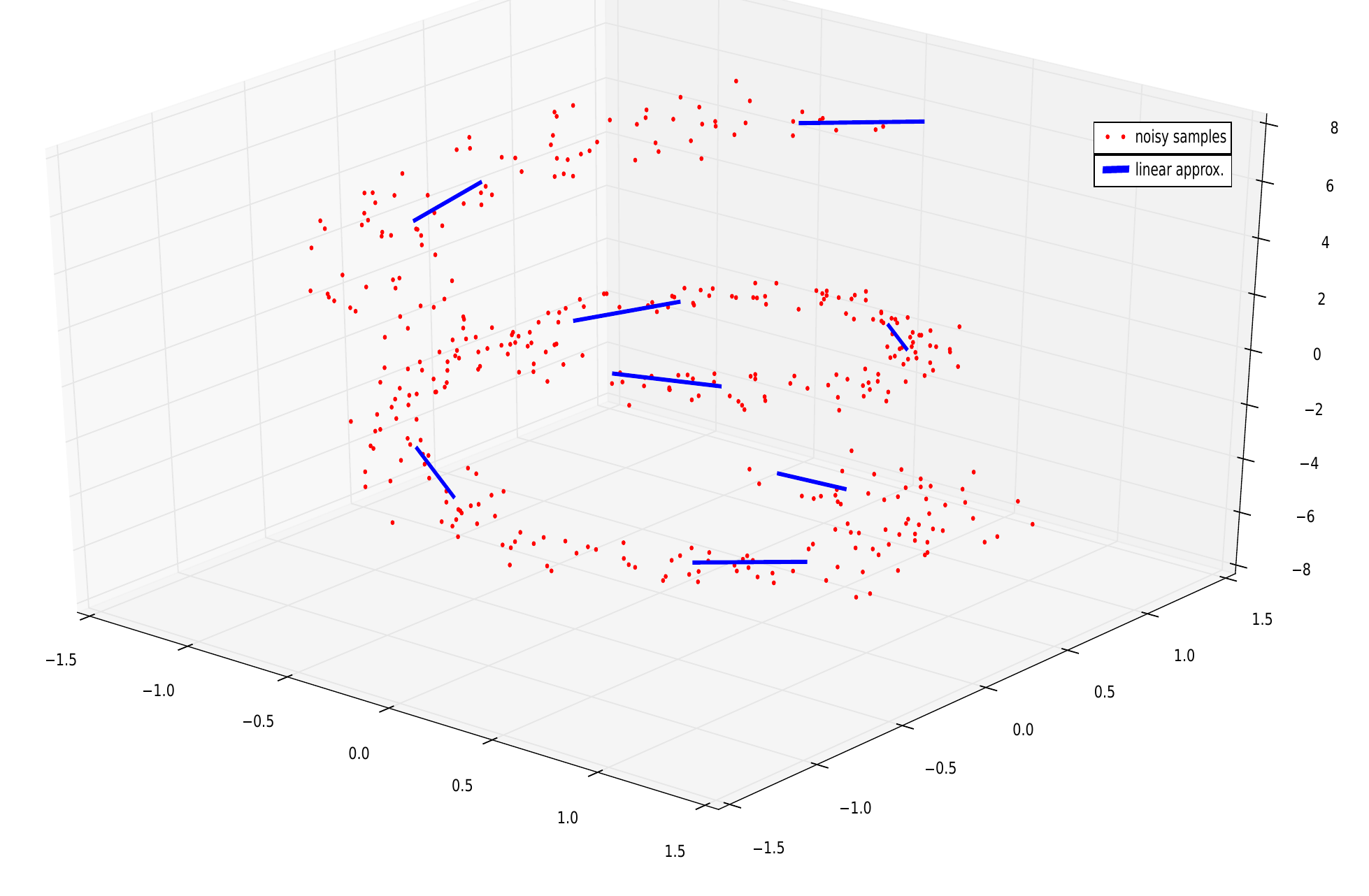}

\par\end{centering}

\caption{An approximation of the local coordinates $H(r)$ resulting from Step 1 implementation after three iterations, performed on several points $r$ near a noisy helix.\label{fig:HelixLinear}}

\end{figure}

\section{Theoretical analysis of the approximation}
\label{sec:Theory}
The main goal of the analysis presented in this section is the smoothness and approximation order theorems mentioned above (i.e., Theorems \ref{thm:ManifoldMMLS} and \ref{thm:OrderMMLS}). In the course of this analysis, we have built a theoretical connection between Least-Squares and PCA, discussed in Section \ref{sec:LSPCA}. This connection is being utilized in Section \ref{sec:HSmooth} as a tool for proving the smooth change of the coordinate system $H$. Nevertheless, the results reported in Section \ref{sec:LSPCA} are of general interest beyond the scope of this paper and will be discussed in a future publication. 

\subsection{Iterative least-squares and the approximation of the span of principal components}
\label{sec:LSPCA}
As a preparatory step for the proof of the smoothness of the affine sub-spaces $H(r)$ we first consider the following, simpler, iterative procedure.

Let $\lbrace r_i \rbrace_{i=1}^I$ be our sample set and let 
\[
\mathcal{R} = 
\left[
\begin{array}{ccc}
| &  & | \\
r_1 & \cdots & r_I \\
| &  & | 
\end{array}
\right]
.\]
Then, given an initial $d$-dimensional coordinate system
\[
U_0 = 
\left[
\begin{array}{ccc}
| &  & | \\
u^0_1 & \cdots & u^0_d \\
| &  & | 
\end{array}
\right]
,\]
we define the iterative least-squares procedure as:
\begin{enumerate}
\item Solve the linear least-squares problem
\[
A_{k+1} = \argmin_{A\in M_{n\times d}} \sum_{i=1}^I \norm{r_i - A x_i^k}^2 = \argmin_{A\in M_{n\times d}}\norm{\mathcal{R} - A X_k}_F^2
,\]
where 
\[
X_k = 
\left[
\begin{array}{ccc}
| &  & | \\
x^k_1 & \cdots & x^k_I \\
| &  & | 
\end{array}
\right]=
U_k^T \cdot \mathcal{R}
\] 
are the projections of $r_i$ onto $Col(U_k)$ the column space of $U_k$.

\item Apply Gram-Schmidt on the columns of $A_{k+1}$ to get a new orthogonal coordinate system. Namely,
\[
U_{k+1} \defeq Q(qr(A_{k+1}))
,\]
where $qr(A)$ is the QR decomposition of the matrix $A$ and $Q(qr(A))$ is the left matrix of this decomposition.
\end{enumerate}
In the following, we assume that the points are dense enough so that the least-squares is well posed at each iteration. Thus, using the aforementioned notation, the following proposition follows immediately
\begin{prop}
\[
A_{k+1} = \mathcal{R} \mathcal{R}^T U_k (U_k^T \mathcal{R} \mathcal{R}^T U_k)^{-1}
.\]
\end{prop}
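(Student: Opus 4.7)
The plan is to recognize this as a standard (unconstrained) linear least-squares problem in the Frobenius norm and write down its normal equations, then substitute the definition $X_k = U_k^T R$ to arrive at the stated closed form.

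Concretely, I would first expand $\norm{R - A X_k}_F^2 = \operatorname{tr}\bigl((R - A X_k)(R - A X_k)^T\bigr)$ and differentiate with respect to the matrix $A$, or equivalently, view the problem column-wise and note that each row of $A$ satisfies an ordinary least-squares system with design matrix $X_k^T$. Either route yields the normal equations
\[
A_{k+1}\, X_k X_k^T = R X_k^T,
\]
and under the well-posedness assumption stated just above the proposition (the data being dense enough so the least-squares is well posed), the Gram matrix $X_k X_k^T$ is invertible, so
\[
A_{k+1} = R X_k^T \bigl(X_k X_k^T\bigr)^{-1}.
\]

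Next, I would substitute $X_k = U_k^T R$. This gives $X_k^T = R^T U_k$, so $R X_k^T = R R^T U_k$ and $X_k X_k^T = U_k^T R R^T U_k$. Plugging these into the expression above produces
\[
A_{k+1} = R R^T U_k \bigl(U_k^T R R^T U_k\bigr)^{-1},
\]
which is exactly the claim.

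There is no real obstacle here; the only point worth a brief remark is the invertibility of $U_k^T R R^T U_k$, which I would justify by noting that $U_k$ has orthonormal columns (being produced by the QR step of the procedure, and similarly for $U_0$ if we take it orthonormal at initialization) and that well-posedness of the least-squares step is explicitly assumed, so the $d\times d$ matrix $X_k X_k^T$ must be nonsingular. Everything else is a direct manipulation of the normal equations.
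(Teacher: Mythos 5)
Your derivation is correct and is exactly the computation the paper has in mind: the paper offers no written proof (it states the proposition ``follows immediately''), and the intended justification is precisely the normal equations $A_{k+1}X_kX_k^T = RX_k^T$ followed by the substitution $X_k = U_k^T R$, with invertibility of $U_k^T R R^T U_k$ covered by the well-posedness assumption stated just before the proposition. Nothing to add.
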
 
Hence, we get the following proposition as well
\begin{prop}
\[Col(A_{k+1}) = Col(\mathcal{R} \mathcal{R}^T U_k),\]
where $Col(A)$ is the column space of the matrix $A$.
\end{prop}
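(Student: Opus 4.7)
The plan is to deduce this corollary directly from the previous proposition $A_{k+1} = R R^T U_k (U_k^T R R^T U_k)^{-1}$ by exploiting the standard fact that right-multiplication of a matrix by an invertible square matrix preserves its column space. So the whole argument reduces to verifying that the $d \times d$ matrix $M \defeq U_k^T R R^T U_k$ is invertible under the standing well-posedness assumption.

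First, I would observe that $M = (R^T U_k)^T (R^T U_k)$ is the Gram matrix of the columns of $R^T U_k$, hence symmetric and positive semi-definite. It fails to be invertible precisely when $R^T U_k$ has a nontrivial null vector, i.e., when there exists a nonzero $v \in \RR^d$ with $R^T U_k v = 0$. This would mean that the vector $U_k v \in Col(U_k)$ lies in the left null space of $R$, so that all data points $r_i$ are orthogonal to some nonzero direction inside the current coordinate subspace $Col(U_k)$. But this is exactly the degenerate configuration ruled out by the assumption that "the points are dense enough so that the least-squares is well posed at each iteration," since in that case the least-squares problem defining $A_{k+1}$ would not have a unique minimizer. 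Hence $M$ is strictly positive definite, and in particular invertible.

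Given invertibility of $M$, write $A_{k+1} = (R R^T U_k) M^{-1}$. For any invertible $d \times d$ matrix $N$ and any $n \times d$ matrix $B$, the columns of $BN$ are $\RR$-linear combinations of the columns of $B$ (with coefficient vectors given by the columns of $N$), and vice versa via $B = (BN) N^{-1}$; so $Col(BN) = Col(B)$. Applying this with $B = R R^T U_k$ and $N = M^{-1}$ yields
\begin{equation*}
Col(A_{k+1}) = Col(R R^T U_k),
\end{equation*}
which is the claim. No step here is delicate; the only nontrivial point is the invertibility of $M$, and that is built into the well-posedness hypothesis stated just before the proposition, so I expect the proof to be essentially a one-liner once that observation is recorded.
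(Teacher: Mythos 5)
Your proposal is correct and matches the paper's (implicit) argument: the paper derives this proposition directly from the preceding formula $A_{k+1} = R R^T U_k (U_k^T R R^T U_k)^{-1}$, with invertibility of $U_k^T R R^T U_k$ guaranteed by the stated well-posedness assumption, exactly as you argue. Your write-up simply makes explicit the Gram-matrix observation and the column-space-preservation fact that the paper leaves unstated.
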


Furthermore, the columns of the matrix $U_{k}$, as defined in the second step of the iterations, are merely the result of applying the Gram-Schmidt process onto the matrix $ A_{k} $. Thus, the columns of $U_{k}$ are just some orthonormal basis of $Col(A_{k})$. 
As a result, since we are interested only in the column space of $A_k$, instead of solving the least-squares problem of the first step at each iteration, we can take a basis of $\mathcal{R} \mathcal{R}^T U_k$, which spans the exact same space. 
Explicitly, we can define the equivalent iterative procedure:
\begin{enumerate}
\item $U_{k+1} \defeq Q(qr(\mathcal{R} \mathcal{R}^T U_k))$
\end{enumerate}
Taking a close look at the newly defined iterations, it is apparent that it coincides with applying subspace iterations with respect to the matrix $\mathcal{R} \mathcal{R}^T$ \cite{stewart2001matrix}. Thus, the limit subspace achieved by this procedure would be the span of the first $d$ principal components of the matrix $\mathcal{R}$. Furthermore, if we denote the singular values of $\mathcal{R}$ by $\sigma_1 \geq ... \geq \sigma_d > \sigma_{d+1} \geq ... \geq \sigma_n$ we know that this process converges geometrically with a decay factor of magnitude $O\left(\abs{\frac{\sigma_{d+1}}{\sigma_{d}}}\right)$.
For a more elaborate proof and explanation of this discussion, we refer the readers to \cite{AizenbudLevinSober2019LS2PC,phdthesisSober}.

\subsection{Analysis of the MMLS projection}
We now define the approximating manifold as
\begin{equation}
    \tilde \MM \defeq \lbrace P_m(x) ~\vert~ x \in \MM \rbrace
,\end{equation}
where $P_m(x)$ is the MMLS projection described in equation \eqref{eq:Step2Projection}. In the next subsections we intend to show that this approximant, is a $C^{\infty}$ $d$-dimensional manifold, which approximates the original manifold up to the order of $\OO(h^{m+1})$, in case of clean samples. Furthermore, we show that $P_m(r) \in \tilde \MM$ for all $r$ close enough to the sampled manifold $\MM$. 

For convenience, we restate the problem presented in Step 1 and in Equation \eqref{eq:Step1Minimization}: given a point $r$ and scattered data $R = \lbrace r_i \rbrace_{i=1}^{I}$, find an affine subspace $H$ of dimension $d$ and an origin $q \in H$ which minimizes
\[
J(r; q , H) = \sum_{i=1}^{I} d(r_i , H)^2 \theta(\| r_i - q\|)
,\]
under the constraints
\begin{enumerate}
\item $r-q \perp H$ 
\item $q\in B_{\mu}(r)$ 
\item $\#\left(R\cap B_{\sigma+ h}(q)\right) \neq 0$
,\end{enumerate}
where $h$ is the fill distance of our sample set with respect to the domain $\MM$ (see the Sampling Assumption sets in Section \ref{sec:NoisySampling}).

\subsubsection{Some approximation results and motivation for Assumption \ref{eq:uniqueAssume}}
We start our inquiry by showing some initial approximation convergence properties for the minimization problem of Step 1 when the fill distance $h\rightarrow 0$, even without assuming the existence of a uniqueness domain.
An immediate result of the convergence would be that in the limit case (i.e. when $h\rightarrow 0$; or alternatively, when the sample set is the entire manifold) there exists a uniqueness domain.
Explicitly, we look at the given sample set as an instance from a family of sample sets refining with $h$.
We denote henceforth by $q^*_h(r)$ and $H^*_h(r)$ the solutions to the minimization problem of Equation \eqref{eq:Step1Minimization} with respect to a point $r$ and a sample set with a corresponding fill distance $h$.
In order to measure the difference between $H^*_h(r)$ and $T_{P(r)}\MM$, where $P(r)$ denotes the projection of $r$ onto $\MM$, we use the operator norm.
Explicitly, as we wish to know the difference in principal angles between these two affine spaces, we look at the difference in operator norm between the projections on their Grassmannian counterparts; i.e., if $H^*_h = q^*_h + span\{e_k\}_{k=1}^d$ and $T_{p}\MM = p + span\{e'_k\}_{k=1}^d$ then we measure the distance between the projections onto $\GG H^*_h = span\{e_k\}_{k=1}^d$ and $\GG T_p\MM = span\{e'_k\}_{k=1}^d$ by the operator norm:
\begin{equation}\label{eq:AngleBetweenAffine}
    \norm{P_{H^*_h(r)} - P_{T_p\MM}}_{op} \defeq 
    \norm{P_{\GG H^*_h(r)} - P_{\GG T_p\MM}}_{op} = 
    \max_{x\in \RR^n}\frac{\norm{(P_{\GG H^*_h(r)}- P_{\GG T_p\MM})x}}{\norm{x}}
,\end{equation}
and this is equivalent to measuring the maximal principal angle between the two affine spaces \cite{bjorck1973PrincipalAngles}.

Below, we show that as $h$ tends to zero  $q^*_h(r)\rightarrow P(r)$ and $H^*_h\rightarrow T_{P(r)}\MM$ (in the sense that $\norm{P_{H^*_h(r)} - P_{T_{P(r)}\MM} }_{op} \rightarrow 0$).
This convergence occurs in both the Clean Sampling Assumptions and Noisy Sampling Assumptions described in Section \ref{sec:CleanSampling} above.
However, the proofs deal only with the noisy case, as it encapsulates the results in the clean case as well.
In order to be able to show these properties, we add the demand that the noise bound $\sigma$ decays to zero as the fill distance $h\rightarrow 0$.

\begin{prop}
Let the Noisy Sampling Assumptions of Section \ref{sec:NoisySampling} hold, with the noise bounded by $\sigma = c_1 h$ and let $p\in\MM$. 
Then, for $c_2 \geq 2\sqrt{1+c_1^2} + (1+c_1)$ and small enough $h$ we have $\#(B_{c_2 h}(p)\cap R) \geq 2^d$. 
Furthermore, there exists a subset of $d$ points $r_{j} \in B_{c_2 h}(p)\cap R$ such that $ r_j - p$ are linearly independent.
\label{prop:PointsDistributionNoise}\end{prop}
\begin{proof}
Without limiting the generality, we set $p=0$. In case $\MM$ is flat, then it is a $d$-dimensional linear subspace of $\RR^n$. For convenience, let  $x\in\MM$ be written in some coordinate system as $x=(x_1, \ldots, x_d, 0, \ldots, 0)\in\RR^n$, and let $(x)_l$ denote the $l^{th}$ coordinate of a vector $x\in\RR^d$. 
We now look at the $h[2 \sqrt{1+c_1^2} + (1+c_1)] $ size neighborhood of $p$, or simply $B_{v}(0)$. 
Clearly, the grid points
\[p_{j}\in \left\lbrace x ~\left\vert~ (x)_l = \pm \frac{2 \sqrt{1+c_1^2}}{\sqrt{d}}h ~,~ \text{for } l=1,\ldots, d \right. \right\rbrace \] 
as well as the discs $B_{{(1+c_1)}h}(p_j)$ for $j=1,\ldots,2^d$ are contained in $B_{c_2h}(0)$ (see Figure \ref{fig:richSample} for an illustration).
Since $h$ is the fill distance, each disc $B_{(1+c_1)h}(p_j)$ contains at least one point $\tilde r$ of the set $\tilde R$ as well as its noisy version $ r =\tilde r + n$, as $\norm{n}<c_1 h$. 
Thus, $\#(B_{c_2h}(0)\cap R) \geq 2^d$ and there exists a set of $d$ linearly independent vectors in $B_{c_2h}(0)\cap R $, as required.
Now, going back to the case where the manifold is not flat, since $\MM\in C^2$ the distance between the tangent at $p$ to its $\OO(h)$ neighboring samples is $\OO(h^2)$, thus for a small enough $h$ the above argument holds.
\begin{figure}[ht]
\begin{centering}
\includegraphics[width={0.5\linewidth}]{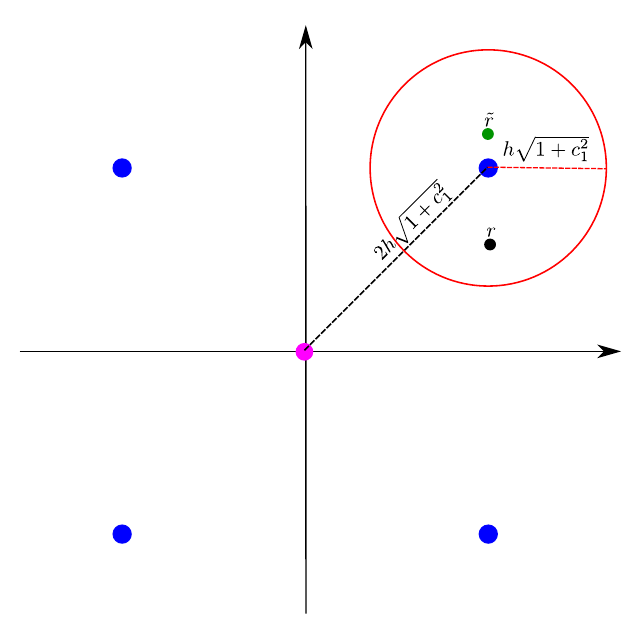}
\par\end{centering}

\caption{An illustration of the proof of Proposition \ref{prop:PointsDistributionNoise} in the flat domain.\label{fig:richSample}}

\end{figure}
\end{proof}

\begin{Lemma}[Convergence to the tangent]
Let the Noisy Sampling Assumptions of Section \ref{sec:NoisySampling} hold, and let the noise be bounded by $\sigma = c_1 h < \mu$, for some constant $c_1$ and $\mu$ of constraint \ref{init_constraint:search} of Equation \eqref{eq:Step1Minimization}. 
Let $U_{reach}$ be the reach neighborhood of $\MM$ \eqref{eq:ReachNeihborhood}, and the function $\theta(t)$ of Equation \eqref{eq:Step1Minimization} be monotonically decaying and compactly supported with $supp(\theta) = c_2 h$, where $c_2$ is some constant strictly greater than $2\sqrt{1 + c_1^2} + (1 + c_1)$. Suppose that $\theta(c_2h)>c_3>0$, for some constant $c_3$. 
Then, for $r$ such that $d(r, \MM) < rch(\MM)/4$ and $\mu= rch(\MM)/2$ (see Figure \ref{fig:uniqueCircle} for an illustration) we get
\begin{enumerate}
    \item as $h\rightarrow 0$ 
    \begin{equation}
 \norm{P_{H^*_h}(r_i)- r_i} = \Theta(h), ~~~ \forall r_i\in B_{c_2h}(q_h^*(r))\cap R
\label{eq:HapproximationNoise}\end{equation}

\item The following limits exist and
\begin{equation}
\lim_{h\rightarrow 0}q_h^*(r)= P(r); ~ \text{ and } ~\lim_{h\rightarrow 0}\norm {P_{H_h^*(r)} -P_{T_{P(r)}\MM}}_{op} =0
\label{eq:HapproximationNoise2}\end{equation}
\item If $q_h^*(r)= P(r) + \epsilon$ (where $\epsilon\in\RR^n$ and $\norm{\epsilon} = \eps$) then
\begin{equation}
\norm {P_{H_h^*(r)} -P_{T_{P(r)}\MM}}_{op} \leq \OO(h + \eps^2)
\label{eq:HapproximationNoise3}\end{equation}
\end{enumerate}

where $q^*_h, H^*_h$ denote a possible solution to the minimization of Equation \eqref{eq:Step1Minimization}.
(Note that we only assume the existence of a minimizer and do not demand its uniqueness as portrayed in Assumption \ref{eq:uniqueAssume})
\label{lem:HapproximationNoise}\end{Lemma}
\begin{proof}
We first notice that $q = P(r)$ coupled with $H = T_{P(r)}\MM$ maintain constraints \ref{init_constraint:perp}-\ref{init_constraint:proximity} of Equation \eqref{eq:Step1Minimization} since the projection onto $\MM$ keeps the condition 
\[
r - P(r) \perp T_{P(r)}\MM
,\] 
and constraint \ref{init_constraint:perp} is met. 
By the fact that $d(r, \MM) < rch(\MM)/4$ and the fact that $\mu = rch(\MM)/2$ we get that constraint \ref{init_constraint:search} is met.
In addition, since $h$ is the fill distance and $supp(\theta) = c_2h$, 
there exists some $ r_j\in R$ such that $\norm{\tilde r_j - P(r)} < h$, where $\tilde r_j$ is the clean version of $r_j$ as described in the Noisy Sampling Assumptions.
Since
\[
\norm{ r_j - P(r)} = \norm{ r_j - \tilde r_j + \tilde r_j- P(r)} \leq \norm{r_j -\tilde r_j} + \norm{\tilde r_j - P(r)} < c_1 h + h =  \sigma + h
,\]
we achieve 
\[
\#R\cap B_{ \sigma + h}(P(r)) \neq 0
,\] 
and constraint \ref{init_constraint:proximity} is met as well. 

Furthermore, since the tangent space is a first order approximation of a manifold $\MM\in C^2$, the cost function is compactly supported, and the sampling is a noisy version of an $\hrho$ set (see the definition of $\rho$  in Equation \eqref{def:rho}), then for all $x\in\MM$ (including $P(r)$) we have
\[
J(r; x, T_{x}\MM) =  \sum_{i=1}^I d^2( r_i, T_{x}\MM)\theta(\norm{ r_i - x}) \leq 
\]
\[\leq
  \sum_{i=1}^I d^2(\tilde r_i, T_{x}\MM)\theta(\norm{ r_i - x}) + 
  \sum_{i=1}^I \norm{ r_i - \tilde r_i}^2\theta(\norm{ r_i - x}) = 
  \OO(h^4) + \OO(h^2)
,\]
and so
\begin{equation*}
  J(r; x, T_{x}\MM) = \OO(h^2) \text{, as } h\rightarrow 0. 
\end{equation*}
Thus, as $h\rightarrow 0$ we get that the minimum $J(r; q_h^*(r), H_h^*(r)) = \OO(h^2)$ as well, and  
\begin{equation}
d(r_i, H^*_h) = \norm{P_{H^*_h}( r_i)-  r_i} = \OO(h) \label{eq:HapprpoxR}
\end{equation}
for $ r_i\in B_{c_2h}(q_h^*(r))\cap R$, since $\theta(c_2h)>c_3$.
Hence, we showed that \eqref{eq:HapproximationNoise} holds.

From constraint \ref{init_constraint:proximity} of Equation \eqref{eq:Step1Minimization} we know that for any given $h$ there must exist a point $ r_h\in  R$ such that $\norm{ r_h - q_h^*(r)} < \sigma + h = (1+c_1)h$. 
Furthermore, $ r_h =\tilde r_h + n_h$ for some $\tilde r_h\in \tilde R\subset\MM$ and $\norm{n_h} < c_1 h$ so for any given $h$ there exists $\tilde r_h\in \tilde R\subset \MM$ such that 
\[\norm{\tilde r_h - q_h^*(r)} =
\norm{\tilde r_h -r_h + r_h - q_h^*(r)} \leq 
\norm{\tilde r_h -r_h} + \norm{r_h - q_h^*(r)} < (2c_1 +1)h
.\]
Thus, since the manifold is closed, we get that there exists an accumulation point $p$ of $q_h^*(r)$ when $h\rightarrow 0$ (i.e., $r$ is fixed). 
Moreover, $p$ must be in $\MM$ as the distance $d(q^*_h(r), \MM) \leq \OO(h)$ tends to zero as $h\rightarrow 0$.
Let us look at a sequence $h_k\rightarrow 0 $ such that $q_{h_k}^*(r) = p + \epsilon_{h_k}$, and $\epsilon_{h_k}\rightarrow 0$, where $\norm{\epsilon_{h_k}} = \eps_{h_k}$.
Thus, for $ r_i\in B_{c_2h}(q_h^*(r))\cap R$ we get that 
\begin{equation*}
    \norm{\tilde r_i - P_{T_p\MM}(\tilde{r}_i)} = \OO(\eps_{h_k}^2)
,\end{equation*}
where $\tilde r_i$ are the clean versions of $r_i$.

Then, from the fact that locally the tangent is a linear approximation with a second order error term, and from Equation \eqref{eq:HapproximationNoise} we have that 
$$\norm{P_{H^*_{h_k}}(r_i) - P_{T_p\MM}(r_i)} = 
\norm{P_{H^*_{h_k}}(r_i) - r_i + r_i - P_{T_p\MM}(r_i)} \leq
\norm{P_{H^*_{h_k}}(r_i) - r_i} + \norm{r_i - P_{T_p\MM}(r_i)}$$
$$= \OO(h_k) + \norm{r_i - \tilde r_i + \tilde r_i - P_{T_p\MM}(r_i)} \leq 
\OO(h_k) + \norm{r_i - \tilde r_i} + \norm{\tilde r_i - P_{T_p\MM}(r_i)}$$
$$=\OO(h_k) + \OO(h_k) + \norm{\tilde r_i -  P_{T_p\MM}(\tilde r_i) + P_{T_p\MM}(\tilde r_i) - P_{T_p\MM}(r_i)}$$
$$\leq
\OO(h_k) + \norm{\tilde r_i -  P_{T_p\MM}(\tilde r_i)} + \norm{P_{T_p\MM}(\tilde r_i) - P_{T_p\MM}(r_i)}$$
$$\leq \OO(h_k) +\OO(\eps_{h_k}^2) + \norm{\tilde r_i - r_i} = \OO(h_k + \eps^2_{h_k}) $$
for points $r_i \in B_{c_2h_k}(q^*_{h_k}) \cap  R$.
Note that both projection operators $P_{H^*_{h_k}}$ and $P_{T_p\MM}$ are determined uniquely by $d$ linearly independent data points, as they are projections onto a $d$-dimensional affine spaces.
By Proposition \ref{prop:PointsDistributionNoise}, since $\theta(c_2h_k)>c_3$, we get that the data in $B_{c_2h_k}(p)\cap  R$ contain $d$ linearly independent points. 
Thus, the fact that $\norm{P_{H^*_{h_k}}(r_i) - P_{T_p\MM}(r_i)} = \OO(h_k + \eps^2_{h_k})$ indicates that
$ \norm {P_{H_{h_k}^*} -P_{T_{p}\MM}}_{op} = \OO(h_k + \eps^2_{h_k})$ by its definition in Equation \ref{eq:AngleBetweenAffine}.

Let us now show that all accumulation points $p$, which as explained above must be in $\MM$, must be exactly $P(r)$.
If $p\neq P(r)$ then we know that $r-p \not\perp T_p\MM$, as $r$ belongs to $U_{reach}$, for which there is a unique projection onto $\MM$. 
However, for all $h_k$ we have $r-q^*_{h_k} \perp H^*_{h_k}$, i.e., for all vectors $v_{h_k}\in \{x - q^*_{h_k}| x\in H^*_{h_k}\} $ we have $\langle r-q^*_{h_k}, v_{h_k} \rangle = 0$. 
Thus for all $v\in T_p\MM $ we can find $v_{h_k} \rightarrow v$ and 
$$\langle r-p, v \rangle = \lim_{h_k\rightarrow 0}\langle r-q^*_{h_k}, v_{h_k} \rangle = 0,$$
which contradicts the fact that $r-p \not\perp T_p\MM$.

As a consequence, we achieve that all accumulation points of $q^*_h$ must equal to $P(r)$ and so 
\begin{equation*}
q_h^*(r)= P(r) + \epsilon_h,~ (\eps_h\xrightarrow{h\rightarrow 0} 0); ~ \text{ and } ~\norm {P_{H_h^*(r)} -P_{T_{P(r)}\MM}}_{op} = \OO(h + \eps^2_{h})
,\end{equation*}
as required in Equations \eqref{eq:HapproximationNoise2} and \eqref{eq:HapproximationNoise3}.
\end{proof}

An immediate result of Lemma \ref{lem:HapproximationNoise} is that if our sample set is the entire manifold (i.e., when the clean samples are the entire manifold $\MM$), there exists a uniqueness domain $U_{unique}\defeq \{r | d(r, \MM) < rch(\MM)/4\}$ for the minimization problem of Equation \eqref{eq:Step1Minimization} (up to the fact that instead of sums we would have integrals). 
This result gives the motivation behind Assumption \ref{eq:uniqueAssume} for the discrete case, as a generalization of the reach neighborhood of the sampled data.

\subsubsection{Smoothness of the coordinate system $H$}
\label{sec:HSmooth}

 In this subsection, we aim at showing that the moving coordinate system produced by Step 1 of the MMLS algorithm is a smooth family with respect to the projected points $r$. 
 This part will enable our main results (Theorems \ref{thm:ManifoldMMLS} and \ref{thm:OrderMMLS}) regarding the MMLS projection in Section  \ref{sec:MMLSResults}.
 
 We set the focus at the beginning of this section to some properties of $H^*$, the approximating affine space.
 Then we show that the entire procedure is indeed a projection as expected. 
 And finally, we show that the approximating affine spaces $H^*(r)$ and their origins $q^*(r)$ are smooth families with respect to the projected point $r$. 

The following Lemma shows that if we fix the point $q$ we can define an affine space $H'(r;q)$ optimizing the function of Equation \eqref{eq:Step1Minimization}. Essentially, this affine space is achieved through Principal Component Analysis (PCA) of the data centered around $q$, after removing the direction of $r-q$.

\begin{Lemma}
Let the Noisy Sampling Assumptions of Section \ref{sec:NoisySampling} hold. Assume $q\in\RR^n$ is fixed, denote by $\{\tilde{w}_i\}_{i=1}^I$ the projections of $\{r_i - q\}_{i=1}^I$ onto the orthogonal complement of $Span\{r-q\}$, and let $\mathcal{R}$ be a matrix whose columns are $ \tilde{w}_i \cdot\sqrt{\theta(\norm{r_i - q})} $. Furthermore, assume that $rank(\mathcal{R}) > d$. Then, $H'$ minimizing the function $J(r;q,H)$ such that $q \in H'$ and $r-q \perp H'$, is determined uniquely by
\begin{equation*}
    H'(r; q) = q + Span \lbrace \vec{u}_k\rbrace_{k=1}^d
,\end{equation*}
where $\vec{u}_k$ are the leading principal components of the matrix $\mathcal{R}$.
In other words, the minimizing $H$ can be written as a function of $q$, i.e. as $H'(r; q)$.
\label{lem:Hofq}\end{Lemma}
\begin{proof}
Let $W$ be the affine $1$-dimensional subspace spanned by $r-q$. Specifically, we mean that $W = Span \lbrace r - q \rbrace + q$. Without loss of generality, we assume $q = \vec{0} \in \RR^n$ (otherwise we can always subtract $q$ and the proof remains the same) and therefore $H'$ is now a standard linear space around the origin. Accordingly, since $r - q = r$ the constraint mentioned above can now be rewritten as
\[ r \perp H .\]
So $W = Span \lbrace r \rbrace$ and we denote the projections of $\lbrace r_i \rbrace_{i=1}^{I}$ onto $W^{\perp}$ as $\lbrace w_i \rbrace_{i=1}^{I}$. Now let $\lbrace e_k \rbrace_{k=1}^n$ be an orthonormal basis of $\RR^n$ such that $\lbrace e_k \rbrace_{k=1}^d$ is a basis of $H$ and $e_{d+1} = \frac{r}{\norm{r}}$. Using this notation the minimization problem can be articulated as 
\[
J(r ; q , H) = \sum_{i=1}^{I} d(r_i , H)^2 \theta(\| r_i - q\|) = 
\sum_{i=1}^{I} \theta(||r_i||) \sum_{k=d+1}^n \abs{\langle r_i , e_k \rangle}^2
.\]
Looking closer at the inner product on the right hand side we get
\[
\sum_{k=d+1}^n \abs{\langle r_i , e_k \rangle}^2 = \norm{Q r_i - r_i}^2
,\]
where $Q$ is an orthogonal projection of $r_i$ onto $H$. Now since $H \subset W^\perp$ the first element of this summation 
\[
\abs{\langle r_i , e_{d+1} \rangle}^2 = \abs{\langle r_i , \frac{r}{\norm{r}} \rangle}^2
,\]
is invariant with respect to the choice of $H$. Thus we can reformulate the minimization problem as 
\[
\hat{J}(r;q,H) = \sum_{i=1}^{I} \theta(||r_i||) \sum_{k=d+2}^n \abs{\langle r_i , e_k \rangle}^2 = \sum\limits_{i=1}^{I} || P w_i - w_i ||^2 ~ \theta(||r_i||)
,\]
where $P$ is an orthogonal projection from $W^\perp$ onto $H$.
So in fact we wish to find a projection $P^\ast$ onto a $d$-dimensional linear subspace that minimizes the following:
\[
\sum\limits_{i=1}^{I} || P w_i - w_i ||^2 ~ \theta(||r_i||)
.\] 
From the discussion about the geometrically weighted PCA in the Appendix we know that the solution of the problem is given by taking the span of the first $d$ principal components of the matrix
\[
\mathcal{R} = 
\left[
\begin{array}{ccc}
| &  & | \\
w_1 \cdot \sqrt{\theta(\norm{r_1})} & \cdots & w_I \cdot \sqrt{\theta(\norm{r_I})} \\
| &  & | 
\end{array}
\right]
,\]
to be $H'$ - see equation \eqref{eq:geometricPCA}. In case $q\neq \vec{0}$ the matrix $R$ will be:
\begin{equation}\label{eq:RofQ}
\mathcal{R} = 
\left[
\begin{array}{ccc}
| &  & | \\
\tilde{w}_1 \cdot \sqrt{\theta(\norm{r_1 - q})} & \cdots & \tilde{w}_I \cdot \sqrt{\theta(\norm{r_I - q})} \\
| &  & | 
\end{array}
\right]
,\end{equation}
where $\lbrace \tilde{w}_i \rbrace_{i=1}^I$ are the projections of $\lbrace r_i - q \rbrace_{i=1}^I$ onto $W^\perp$. If we denote the singular value decomposition of $\mathcal{R}$ by $\mathcal{R} = U \Sigma V^T$ and $\vec{u}_i$ are the columns of the matrix $U$ (i.e., the eigenvectors of $\mathcal{R} \mathcal{R}^T$) then $H'$ is given explicitly by:
\[
H'(r; q) = Span \lbrace \vec{u}_i\rbrace_{i=1}^d
.\]
\end{proof}

\begin{remark}
Notice that the demand that the rank of the matrix $\mathcal{R}$ should be at least $d$ is, in fact, a demand on the local distribution of points to the tangent directions of $\MM$. Proposition \ref{prop:PointsDistributionNoise} shows that for small enough $h$ this demand is met for $q$ close enough to the manifold.
\end{remark}

This proof gives us intuition regarding the essence of the approximating affine subspace $H$. Explicitly, it is the span of the first $d$ principal components of the weighted PCA around the optimal $q$ with respect to the space $W^\perp$, where $W = q + Span \lbrace r- q \rbrace$. Thus, we can reformulate our minimization problem \eqref{eq:Step1Minimization}: for $r\in U$, find $q$ which minimizes 
\begin{equation}
J^\ast(r; q) = \sum_{i=1}^{I} d(r_i , H'(q))^2 \theta(\| r_i - q\|)
,\label{eq:JofQ}
\end{equation}
where $H'(q)$ is the affine space spanned by the first $d$ principal components of the geometrically weighted PCA around $q$ of the space $W^\perp$. Thus, the minimization problem is now with respect to $q$ alone. This simplifies the minimization task from the analytic perspective rather than the practical one, as the computation of PCA is costly when dealing with large dimensions. 

We now wish to tackle the question whether the approximant defined here is indeed a projection operator. In other words, can we say that we project an $n$ dimensional domain onto a $d$ dimensional one? In order for this to be true, we must demand that for a sufficiently small neighborhood, elements from $H^\perp$ are projected onto the same point (see Fig \ref{fig:HelixElipse} for an illustration). This result is articulated and proved in the following Lemma:

\begin{Lemma}
Let the Noisy Sampling Assumptions of Section \ref{sec:NoisySampling} hold. Let $r$ be in the uniqueness domain $U_{unique}$ of assumption \eqref{eq:uniqueAssume} and let $q^\ast(r)$ and $H^\ast(r)$ be the minimizers of $J(r;q,H)$ as defined above. Then for any point $\tilde{r} \in U_{unique}$ s.t. $\norm{\tilde{r} - q^\ast(r)}<\mu$ and $\tilde{r} - q^\ast(r) \perp H^\ast(r) $ we get $q^\ast(\tilde{r}) = q^\ast(r)$ and $H^\ast(\tilde{r}) \equiv H^\ast(r)$
\label{lem:ProjectionUniqueness}\end{Lemma}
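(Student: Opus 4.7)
My plan is to show that the feasible pair $(q_0, H_0) := (q^\ast(r), H^\ast(r))$ remains the unique minimizer when $r$ is replaced by $\tilde r$. The cornerstone observation is that the scalar objective
\[
J(q, H) \;=\; \sum_{i=1}^I d(r_i, H)^2\,\theta(\|r_i - q\|)
\]
does not explicitly depend on the query point; only the feasibility constraint $(\cdot) - q \perp H$ does. Accordingly, the hypothesis $\tilde r - q_0 \in H_0^\perp$ says exactly that $(q_0, H_0) \in \mathcal F_{\tilde r}$, and thus
\[
\min_{\mathcal F_{\tilde r}} J \;\leq\; J(q_0, H_0) \;=\; \min_{\mathcal F_r} J.
\]
By the uniqueness portion of Assumption \ref{eq:uniqueAssume}, it then suffices to rule out any strict improvement in $\mathcal F_{\tilde r}$.

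For the reverse inequality I would use the reformulation in Lemma \ref{lem:Hofq}: at a fixed $q$, the optimal $H$ is the span of the top $d$ principal components of the weighted covariance $M(q) = \sum_i \theta(\|r_i - q\|)(r_i - q)(r_i - q)^T$ restricted to the orthogonal complement of the ``look direction'' $(\cdot) - q$. Applied at $q = q_0$, the hypothesis $H_0 \perp (\tilde r - q_0)$ makes $H_0$ admissible for the $\tilde r$-problem at $q_0$. The joint first-order optimality at $(q_0, H_0)$ (combining the $H$-stationarity from Lemma \ref{lem:Hofq} with the $q$-stationarity and the Lagrange multiplier for $r - q \perp H$) can be used to show that $H_0$ coincides with the \emph{unconstrained} top $d$ eigenspace of $M(q_0)$; since this unconstrained maximizer automatically remains the maximizer of $\operatorname{tr}(E^T M(q_0) E)$ over any subspace containing it, in particular over $d$-dimensional subspaces of $(\tilde r - q_0)^\perp$, we obtain $H^\ast(\tilde r)|_{q = q_0} = H_0$ and hence $J^\ast(\tilde r;\,q_0) = J(q_0, H_0)$.

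To conclude the equality $q^\ast(\tilde r) = q_0$, I would pass to the reduced scalar objective $J^\ast(\tilde r; q) = \min_{H \,:\, \tilde r - q \perp H} J(q, H)$ and consider the one-parameter family $r_t = (1 - t)r + t\tilde r$ for $t \in [0, 1]$. Since $r_t - q_0$ is a convex combination of $r - q_0$ and $\tilde r - q_0$, both of which lie in $H_0^\perp$, we have $(q_0, H_0) \in \mathcal F_{r_t}$ for every $t$, so $\min_{\mathcal F_{r_t}} J \leq J(q_0, H_0)$ along the entire path. At $t = 0$ the minimizer is $(q_0, H_0)$ uniquely, and the uniqueness at each $t \in [0, 1]$ (Assumption \ref{eq:uniqueAssume}) combined with the continuous dependence of $M(q)$ on $q$ prevents the minimizer from jumping; evaluating at $t = 1$ delivers $q^\ast(\tilde r) = q_0$ and $H^\ast(\tilde r) = H_0$.

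The main obstacle I foresee is the second step: rigorously establishing that $H_0$ coincides with the unconstrained top $d$ PCA of $M(q_0)$, i.e.\ that the orthogonality constraint $r - q \perp H$ is inactive at the joint optimum. This requires a careful KKT analysis combining the $q$- and $H$-stationarity conditions, and it is the nontrivial content that turns the easy feasibility observation in Step~1 into a genuine equality of minima.
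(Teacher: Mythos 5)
The paper's own proof of Lemma \ref{lem:ProjectionUniqueness} is a single sentence: since $\tilde r - q^\ast(r) \perp H^\ast(r)$, the pair $(q^\ast(r), H^\ast(r))$ is feasible for the minimization at $\tilde r$, and the authors read Assumption \ref{eq:uniqueAssume} (together with the note immediately following it, which asserts that $q(r)$ is the same for all $r$ with $r - q(r) \in H^\perp$) as already guaranteeing that the unique constrained minimizer is determined by the affine fiber $q + H^\perp$ rather than by the individual query point. Your opening observation --- that $J$ depends on the query point only through the constraint, so feasibility yields $\min_{\mathcal F_{\tilde r}} J \le J(q_0,H_0) = \min_{\mathcal F_{r}} J$ but not the reverse inequality --- is a sharper reading than the paper's and correctly locates where the real content lies. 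In that sense you are attempting to prove more than the paper does.

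However, the two devices you propose for the reverse inequality both have genuine gaps. First, the claim that the joint first-order conditions force $H_0$ to be the \emph{unconstrained} top-$d$ eigenspace of $M(q_0)$, i.e.\ that the orthogonality constraint is inactive at the optimum, is precisely the nontrivial assertion, and you do not establish it; for noisy samples it is in general only approximately true (compare Lemma \ref{lem:LinearApproximation}, which gives $O(h^2)$ statements, not exact identities), so no KKT computation will deliver exact coincidence of the constrained and unconstrained optima. Second, the homotopy argument is logically insufficient even granting continuity: knowing that $(q_0,H_0)$ stays \emph{feasible} for every $r_t$ and that the minimizer of $J^\ast(r_t;\cdot)$ varies continuously does not prevent that minimizer from drifting continuously away from $q_0$ as $t$ increases; to exclude this you would need the optimal value to be constant along the path, which is again exactly the missing reverse inequality. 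As written, your argument proves only $J^\ast(\tilde r; q^\ast(\tilde r)) \le J^\ast(r; q^\ast(r))$. To close the lemma at the level the paper operates at, you should either invoke the uniqueness assumption in the stronger fiberwise form the authors intend, or genuinely carry out the inactive-constraint analysis you flag as the main obstacle --- the latter would constitute a stronger result than what the paper actually proves.
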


\begin{proof}
Let us rewrite Equation \eqref{eq:Step1Minimization} in the form of Lagrange Multipliers to account for constraint 2 (the rest of the constraints deal with the neighborhood in which we search for the local minimum).
We first note that by taking some orthonormal basis $\{e_j\}_{j=1}^d$ on $GH$ (the Grassmannian counterpart of $H$) we can rephrase the term $d^2(r_i, H)$ from Equation \eqref{eq:Step1Minimization} as
\[
d^2(r_i, H) = \norm{r_i - P_H(r_i)}^2 = \norm{r_i - q}^2 - \norm{\sum_{j=1}^d\langle r_i - q, e_j\rangle e_j}^2 = \norm{r_i - q}^2 - \sum_{j=1}^d\abs{\langle r_i - q, e_j\rangle}^2
.\]

Thus, by setting $\{e_j\}_{j=1}^d$ to be directions in $\RR^n$ we can rephrase Equation \eqref{eq:Step1Minimization} along with constraint 2 as
\begin{equation}\label{eq:LagrangeFormJ}
\begin{array}{ll}
     J(r,q,e_1, ... , e_d, \Lambda) = &  \underbrace{\sum_{i=1}^I \left(\norm{r_i - q}^2 - \sum_{j=1}^d{\langle r_i - q, e_j\rangle}^2\right)\theta(\norm{r_i - q})}_{I} + \underbrace{\sum_{1\leq j < j' \leq d}\lambda_{j j'}\langle e_j, e_{j'}\rangle}_{II} + \\
     & \underbrace{\sum_{j=1}^d\lambda_{j j}(\langle e_j, e_{j}\rangle-1)}_{III} +
     \underbrace{\sum_{j=1}^d\lambda_{j}\langle r - q, e_{j}\rangle^2}_{IV}
\end{array}
,\end{equation}
where $\Lambda = (\lambda_1,\ldots,\lambda_d,\lambda_{11},\ldots,\lambda_{dd} )$.
Notice that terms $II$ and $III$ make sure that the minimizing directions should be some orthonormal basis; i.e., $\langle e_j, e_{j'} \rangle = \delta_{j j'}$, where $\delta_{j j'}$ is the Kronecker delta function.
And term $IV$ makes sure that constraint 2 is being maintained when the gradient of $J$ is null. 

Furthermore, the only terms in Equation \eqref{eq:LagrangeFormJ} which depend on $r$ is $IV$ and its partial derivatives $\frac{\partial J}{\partial q^k}, \frac{\partial J}{\partial e_j^k}$ with respect to the coordinates of $q$ and $e_j$ respectively.
Thus, if $(r; q^*(r), e^*_1(r), \ldots , e^*_d(r))$ is a critical point of $J$ and $\tilde r$ is another point such that $\tilde r - q^*(r) \perp H^*(r)$, then $(\tilde r; q^*(r), e^*_1(r), \ldots , e^*_d(r))$ is a critical point as well, as both $IV$ and its partial derivatives are still null.
Moreover, $J(r; q^*(r), e^*_1(r), \ldots , e^*_d(r)) = J(\tilde r; q^*(r), e^*_1(r), \ldots , e^*_d(r))$ as terms $II, III, IV$ of Equation \eqref{eq:LagrangeFormJ} are null and term $I$ is independent of $r$ and $\tilde r$.
Since $\tilde r\in U$, we get that $q^*(\tilde r) = q^*(r)$ and $H^*(\tilde r) = H^*(r)$ as well.

\end{proof}

\begin{figure}[ht]
\begin{centering}
\includegraphics[width={0.6\linewidth}]{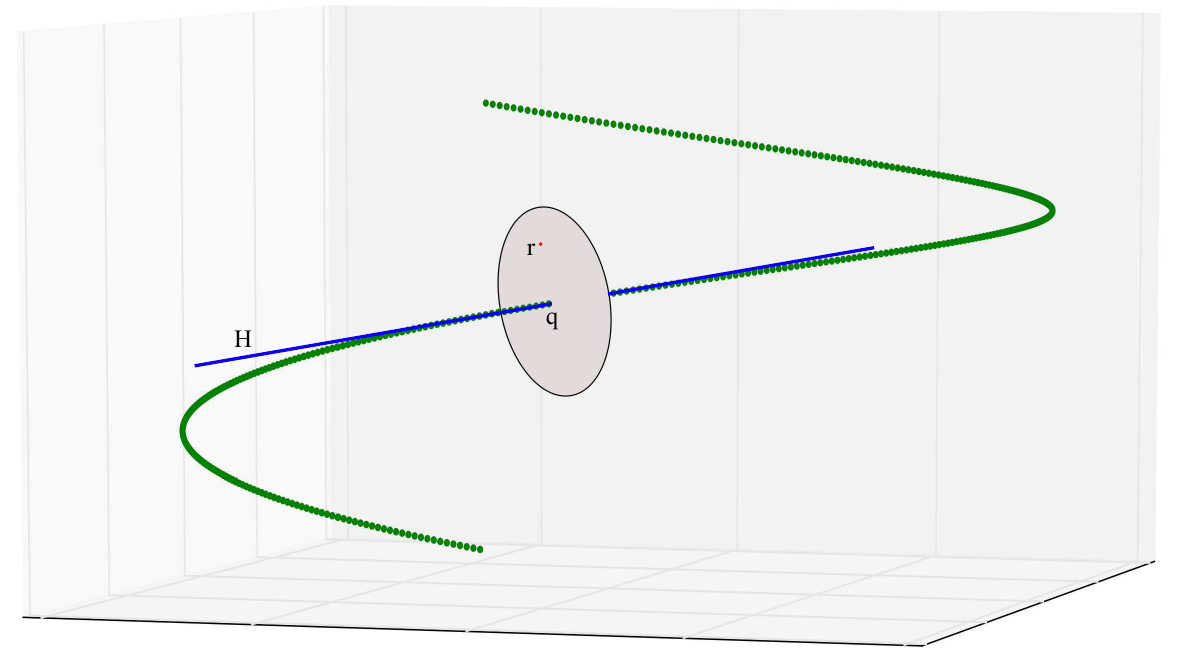}

\par\end{centering}

\caption{An illustration of a neighborhood of $q$ on $H^\perp$. All the points in this neighborhood should be projected to the same point.\label{fig:HelixElipse}}

\end{figure}

In order to be able to conduct an in-depth discussion regarding the smoothness of the approximant, and generalize the results quoted in Theorems \ref{thm:SmoothMLSfunctions} and \ref{thm:OrderMLSfunctions}, we introduce a definition of a smooth family of affine spaces.

\begin{definition}
\label{def:SmoothSpace}
Let $H(r)$ be a parametric family of $d$-dimensional affine sub-spaces of $\RR^n$ centered at the parameter $q(r)$. Explicitly, 
\[
w = q(r) + \sum_{k=1}^d c_k e_k(r) ~~,~~ \forall w\in H(r)
,\]
where $\lbrace e_k(r)\rbrace_{k=1}^d$ is a basis of the linear subspace $\GG H(r)$.
Then we say that the family $(H(r), q(r))$ changes smoothly with respect to $r$ if for any vector $v\in\RR^n$ the function 
\[
w(r) = q(r) + \sum_{k=1}^d \langle v - q(r), e_k(r) \rangle e_k(r)
,\]
describing the Euclidean projections of $v$ onto $H(r)$, vary smoothly with respect to $r$. 
\end{definition}

\begin{remark}
Note that this is equivalent to the demand that a family of projection operators $P_r$ varies smoothly in $r$ with respect to the operator norm (see Lemma \ref{lem:HapproximationNoise}, where the operator norm plays a key role).
\end{remark}

Ideally, we would have wanted to use the proof of Lemma \ref{lem:Hofq} and state that since the matrix $\mathcal{R}$ is smooth in $r$ and $q$ then there exists a smooth choice of $\mathcal{R} \mathcal{R}^T$'s eigenvectors (i.e., the basis $\{ u_k \}$ of $H'$). In the general case of smooth multivariate perturbations of matrices, this is not always true (e.g., see \cite{rainer2013perturbation, hinrichsen2005mathematical}). Nevertheless, in the following lemma, we are able to show that in our case the projections onto $H'(r; q)$ vary smoothly in both parameters. This is made possible due to the manifold structure as well as the utilization of the Iterative Least-Squares mechanism discussed in Section \ref{sec:LSPCA}. As shown there, this algorithm coincides with the famous subspace iterations, which is known to converge geometrically with a decay factor of $\abs{\frac{\sigma_{d+1}}{\sigma_d}}$, where $\sigma_k$ denotes the $k^{{th}}$ singular value of the matrix $\mathcal{R}(q)$ of Equation \eqref{eq:RofQ} (see \cite{stewart2001matrix}).
Since the points $\{r_i\}$ are samples of a $d$-dimensional manifold with additive noise, it is reasonable to assume that, on a local level, the variance of the data is significantly more dominant in its $d$ leading principal components than the others. Therefore, we add this demand to the following Lemma.

\begin{Lemma}
Let the Noisy Sampling Assumptions of Section \ref{sec:NoisySampling} hold.
Let $\theta(t) \in C^\infty$ be a compactly supported weight function with a support of size $\OO(h)$. Let the distribution of the data points $\lbrace r_i \rbrace_{i=1}^{I}$ be such that the minimization problem of $J(r;q,H)$ is well conditioned locally (i.e., the local least-squares matrices are invertible), and the the matrix $\mathcal{R}(q)$ of Equation \eqref{eq:RofQ} has singular values $\sigma_d > \sigma_{d+1}$ for all $(r,q(r))\in U \times \{x ~\vert \norm{x-q^*(r)} < \epsilon\}$, where $U$ is the uniqueness domain of Assumption \ref{eq:uniqueAssume}. In addition, let $H'(r; q)$ be the affine subspace minimizing $J(r;q,H)$ for a given $q$ as described in Lemma \ref{lem:Hofq}. Then $H'(r; q)$ varies smoothly ($C^{\infty}$) with respect to $q$ and $r$ for $(r,q(r))\in U \times \{x ~\vert \norm{x-q^*(r)} < \epsilon\}$
\label{lem:HsmoothinQ}\end{Lemma}
\begin{proof}
From Lemma \ref{lem:Hofq} we know that
\[
H'(r;q) = q +Span\lbrace u_i \rbrace_{i=1}^d
,\]
where $u_i$ are the leading principal components of 
\[
\mathcal{R}(q) = 
\left[
\begin{array}{ccc}
| &  & | \\
w_1 \cdot \sqrt{\theta(\norm{r_1-q})} & \cdots & w_I \cdot \sqrt{\theta(\norm{r_I-q})} \\
| &  & | 
\end{array}
\right]
,\]
and $w_i \defeq Proj_{W^\perp}(r_i - q) = r_i - \langle r_i - q , r -q \rangle \cdot (r - q)$, where $W = Span\{r-q\}$.

We begin with the exploration of smoothness with respect to $q$ in the vicinity of $ q_0, r_0$. Let $B_\delta(q_0) = \{q\in \RR^n \vert \norm{q - q_0} < \delta\}$ be a ball of radius $\delta$ around $q_0$, and let $\delta$ be small enough such that $B_\delta(q_0) \subset \{x ~\vert \norm{x-q^*(r)} < \epsilon\}$. 
For $q_0$ we have the directions of the leading principal components of $\mathcal{R}(q_0)$:
\[
H'(r_0;q_0) = q_0 + Span\lbrace u_i^0\rbrace_{i=1}^d
.\]
Similarly, the subspace $H'(r_0; q)$ is given by the span of the leading principal components of $\mathcal{R}(q)$. However, in order to show the smoothness of $H'(r_0;q)$ we consider another iterative procedure to achieve them.
We set for all $q\in B_\delta(q_0)$ the initial directions $\lbrace u_i^0\rbrace_{i=1}^d$ and update them iteratively using the iterative least-squares algorithm described in Section \ref{sec:LSPCA}, with constant weights. Namely, we initially set 
\[
\tilde{H}^0(r_0; q) = q + Span\lbrace u_i^0\rbrace_{i=1}^d
,\]
and then iterate through the minimization:
\[
A^k(q) = \argmin_{A\in M_{n\times d}} \sum_{i=1}^I \norm{r_i - A x_i }^2\theta(\norm{r_i - q})
,\]
where $x_i$ are the projections of $r_i$ onto $\tilde{H}^{k-1}(r_0; q)$. Note, that $\theta(\norm{r_i - q})$ are fixed for any given $q$. Following this minimization we define 
\[
\tilde{H}^k(r_0;q) \defeq q + Span\lbrace col(A^k(q)) \rbrace
,\]
where $\lbrace col(A^k(q)) \rbrace$ are the columns of the matrix $A^k(q)$. 

We now refer the reader to the proof of Theorem \ref{thm:SmoothMLSfunctions} given in \cite{levin1998MLSapproximation} where the MLS approximation of functions is presented as a multiplication of smoothly varying matrices (under the assumption that $\theta\in C^\infty$). In the case of function approximation, discussed in \cite{levin1998MLSapproximation}, we have the same coordinate system for each point $x$ in the domain. Our case differs in this respect as each iteration can be considered as an MLS approximation (with no constant term), just with a varying coordinate system. Nevertheless, as the solution is represented as a product of smoothly varying matrices, a smooth change in the coordinate system will result in a smooth approximation still. Thus, given that $A^{k-1}(q)$ changes smoothly with respect to $q$, we achieve that $A^{k}(q)$ varies smoothly as well. Since the initial step $\tilde{H}^0(r_0, q)$ is constant with respect to $q$, it follows that $\tilde{H}^k(r_0;q)$ varies smoothly with respect to $q$ for all $k$. 

From the discussion in Section \ref{sec:LSPCA} we know that the iterative procedure converges in a geometrical rate with a decay factor of $\abs{\frac{\sigma_{d+1}}{\sigma_{d}}}$, where $\sigma_k$ is the k$^{th}$ singular value of the matrix $\mathcal{R}(q)$.
From the assumption that for all $q\in\{q ~\vert~ \norm{q-q_0}<\epsilon\}$ the singular values of the matrix $\mathcal{R}(q)$ maintain $\sigma_d > \sigma_{d+1}$, and the fact that singular values vary continuously (see for example \cite{lax2013linear, stewart1990perturbation}), we can bound $\abs{\frac{\sigma_{d+1}}{\sigma_{d}}(q)} < M < 1$ for all $q\in B_\delta(q_0)$. 
Thus, as we have a uniform bound for all $q$, we achieve uniform convergence of the iterative procedure. 
As a result, the limit of the process $H'(r_0;q)$ is smooth as well. 

Let us now refer to the case where $q_0$ is fixed and $r$ belongs to a neighborhood of $r_0$. 
Then $r-q_0$ changes smoothly with $r$ and accordingly so does the space $W^\perp$. Thus, the matrix $\mathcal{R}$ changes smoothly with $r$ and we can apply the same iterative least-squares mechanism to deduce that $H'(r;q)$ is smooth with respect to $r$ as well.

\end{proof}

Note, that the proof of Lemma \ref{lem:HsmoothinQ} can be reproduced in a more general setting of symmetric positive definite matrices. As this notion is outside the scope of this paper we just quote the resulting theorem without any further discussion.
\begin{Theorem}
Let $A(x)\in M_{n\times n}$ be a symmetric positive definite matrix, which depend smoothly on the variable $x\in\RR^N$. Let $\lambda_1(x) \geq \lambda_2(x) \geq ... \geq \lambda_n(x)$ be the eigenvalues of $A(x)$, and let $\lambda_{d}(x) > \lambda_{d+1}(x)$. Then, the eigenspace corresponding to the $d$ most dominant eigenvalues will vary smoothly with respect to $x$. 
\end{Theorem}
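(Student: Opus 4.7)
The plan is to mirror the argument of Lemma \ref{lem:HsmoothinQ}, applying the subspace iteration of Section \ref{sec:LSPCA} directly to $A(x)$ in place of $R(q)R(q)^T$. The statement that the eigenspace varies smoothly is understood intrinsically as smoothness of the orthogonal projector $P_d(x)$ onto the top-$d$ eigenspace, or equivalently of the column space of an orthonormal frame $U(x) \in M_{n\times d}$ of that eigenspace (well-defined up to an orthogonal rotation in the $d\times d$ block, which does not affect the induced projection).

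First I would fix an arbitrary $x_0 \in \RR^N$. By hypothesis $\lambda_d(x_0) > \lambda_{d+1}(x_0)$; since eigenvalues of a symmetric matrix depend continuously on its entries, there is a neighborhood $N(x_0)$ on which the gap ratio $\rho(x) \defeq \lambda_{d+1}(x)/\lambda_d(x) \leq \rho_0 < 1$ is uniformly bounded away from $1$. Pick a constant initial frame $U_0 \in M_{n\times d}$ whose columns have a nontrivial projection onto the top-$d$ eigenspace of $A(x_0)$ (a generic choice works, and the condition persists on a possibly smaller neighborhood by continuity). Then define recursively
\[
U_{k+1}(x) = Q\bigl(qr(A(x)\, U_k(x))\bigr),
\]
where $Q(qr(\cdot))$ denotes the orthonormal factor produced by Gram--Schmidt on the columns.

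Next I would verify by induction on $k$ that $U_k \in C^\infty(N(x_0), M_{n\times d})$. The product $A(x)U_k(x)$ is smooth as a composition of smooth functions. Since $A(x)$ is positive definite and the initial frame has nontrivial projection onto the dominant eigenspace, $A(x)U_k(x)$ retains full column rank $d$ throughout the iteration on $N(x_0)$ (shrinking the neighborhood if necessary), so the successive norms and orthogonalized columns in Gram--Schmidt are computed by divisions by strictly positive smooth functions. This gives $U_{k+1} \in C^\infty(N(x_0))$.

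Finally, by the discussion of Section \ref{sec:LSPCA}, the column spaces $Col(U_k(x))$ converge to the top-$d$ eigenspace of $A(x)$ at geometric rate controlled by $\rho(x)^k$, hence uniformly on $N(x_0)$ at rate $\rho_0^k$. The main obstacle is to promote this to convergence of every derivative $\partial^\alpha U_k$ in order to conclude smoothness of the limit. The standard route is to differentiate the iteration: $\partial^\alpha U_{k+1}$ splits into a ``principal'' term in which the same contractive map (with rate $\rho_0$) is applied to $\partial^\alpha U_k$, plus a remainder built from lower-order derivatives of $U_k$ and derivatives of $A$. Inducting on $|\alpha|$ and invoking the geometric convergence at each level yields uniform Cauchy-ness of $\{\partial^\alpha U_k\}_k$ on compact subsets of $N(x_0)$ for every multi-index $\alpha$. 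Passing to the limit therefore produces a smooth $U_\infty(x)$ whose columns span the dominant eigenspace; the associated projector $P_d(x) = U_\infty(x) U_\infty(x)^T$ is smooth on $N(x_0)$, and since $x_0$ was arbitrary, the eigenspace varies smoothly on the whole set where the spectral gap $\lambda_d(x) > \lambda_{d+1}(x)$ holds.
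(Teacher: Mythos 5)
Your proposal follows the same skeleton the paper intends: the paper gives no standalone proof of this theorem, remarking only that ``the proof of Lemma \ref{lem:HsmoothinQ} can be reproduced'' in this setting, and that proof is precisely the subspace iteration $U_{k+1}=Q(qr(A(x)U_k))$ from a constant initial frame, with smoothness of each iterate and geometric convergence of the column spaces. Where you genuinely add something is in naming the step the paper elides: the lemma's proof concludes ``we have uniform convergence and the limit of the process is smooth as well,'' but uniform convergence of smooth functions does not by itself yield a smooth (or even $C^1$) limit. Your plan --- differentiate the iteration, observe that $\partial^\alpha U_{k+1}$ obeys the same contraction up to remainders involving lower-order derivatives, and induct on $\abs{\alpha}$ --- is the standard fiber-contraction / uniform-contraction-principle repair and does close that gap; equivalently, one can apply the implicit function theorem to the fixed-point equation on the Grassmannian, where the derivative of the iteration map at the dominant subspace has norm $\lambda_{d+1}/\lambda_d<1$, so that $D\Phi-I$ is invertible. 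One imprecision you should fix: the contraction with rate $\rho_0$ holds at the level of the subspace (equivalently the projector $P_k=U_kU_k^T$), not of the frame $U_k$ itself --- if eigenvalues among the top $d$ coincide or cross on $N(x_0)$, the individual columns of $U_k$ need not converge, so the derivative estimates and the passage to the limit should be carried out for $P_k$ rather than for a purported $U_\infty$. Since the theorem only asserts smoothness of the eigenspace this costs nothing, but as written your final paragraph claims slightly more than the iteration delivers.
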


Going back to our analysis, we are now ready to show that our moving coordinate system $H^\ast(r)$ is as well a smooth family of affine spaces.
\begin{Theorem}[Smoothness of $q^*(r), H^*(r)$]
Let the Noisy Sampling Assumptions of Section \ref{sec:NoisySampling} hold. Let $\theta(x)\in C^{\infty}$, $H$ be a $d$-dimensional affine space around an origin $q$ and let $U_{unique}$ be the uniqueness domain of Assumption \ref{eq:uniqueAssume}. Let $q^\ast(r), H^\ast(r)$ be the minimizers of the constrained minimization problem \eqref{eq:Step1Minimization}
Let $J^\ast(r; q)$ be the function described in Equation \eqref{eq:JofQ}.
Then for all $r'\in U$ such that $\left( \frac{\partial^2 J^{\ast}}{\partial q_i \partial q_j}\right)_{ij} \in M_{n \times n}$ is invertible at $(r',q^*(r'))$ we get:
\begin{enumerate}
\item $q^\ast(r)$ is a smooth ($C^{\infty}$) function in a neighborhood of $r'$.
\item The affine space $H^\ast(r)$ changes smoothly ($C^{\infty}$) in a neighborhood of $r'$.
\end{enumerate}
\label{thm:SmoothCoordinates}\end{Theorem}
\begin{proof}
First we express the minimization problem under constraint 2 using the Lagrange multipliers:
\[
J'(r; q, H) = \sum_{i=1}^{I} d(r_i , H)^2 \theta(\| r_i - q\|) + \sum_{k=1}^d \lambda_k \langle r - q , e_k(H)\rangle
,\]
where $\lbrace e_k(H) \rbrace$ is an orthonormal basis of $H$. Since we know from Lemma \ref{lem:Hofq} that $q^\ast$ is as well the minimizer of $J^\ast(r;q)$, which is a function of $r$ and $q$ alone, we can write down:
\[
J^\ast(r; q) = J(r; q, H'(r; q)) = \sum_{i=1}^{I} d(r_i , H'(r; q))^2 \theta(\| r_i - q\|) + \sum_{k=1}^d \lambda_k \langle r - q , e_k(H'(r; q))\rangle
,\]
where $H'(r; q)$ is the affine space defined in Lemma \ref{lem:Hofq}. Specifically, we know that $r-q \perp H'(r; q)$, therefore, $q^\ast$ is the minimizer of:
\begin{equation*}
J^\ast(r; q) = \sum_{i=1}^{I} d(r_i , H'(r; q))^2 \theta(\| r_i - q\|) 
\end{equation*}
In addition, by Lemma \ref{lem:HsmoothinQ} we know that $H'(r; q)$ changes smoothly ($C^{\infty}$) with respect to both $r$ and $q$ for $(r, q(r))\in U\times \{\norm{q - q^*(r)} <\epsilon\}$. Accordingly, we get that $d(r_i,H'(r; q))\in C^{\infty}$, and thus $J^\ast(r; q) \in C^{\infty}$ with respect to $r$ and $q$ in this domain.
Let us now denote:
\[
\nabla_q J^\ast (r , q) = \left(\pdev{J^\ast}{q_1}(r , q) , \pdev{J^\ast}{q_2}(r , q) , ... , \pdev{J^\ast}{q_n}(r , q) \right)^T
,\]
where $q_i$ is the $i^{th}$ coordinate of the vector $q \in \RR^n$. Stated explicitly $\nabla_q J^\ast$ is a $C^{\infty}$ function of $2n$ variables:
\[
\nabla_q J^\ast (r , q):\RR^{2n} \rightarrow \RR^n
,\]
for all $(r, q(r))\in U\times \{\norm{q - q^*(r)} <\epsilon\}$.
Since $q^\ast$ minimizes $J^\ast(r , q)$ for a given $r$ we get:
\[
\nabla_q J^\ast (r , q^\ast) = 0
.\]
Moreover, from the theorem's assumption we have that at $(r', q^*(r'))$ the matrix
\[
\left( \pdev{(\nabla_q J^\ast)_i}{q_j} \right)_{ij} = 
\left( \frac{\partial^2  J^\ast}{\partial q_i \partial q_j}\right)_{ij} \in M_{n \times n}
\]
is invertible, thus we can apply the Implicit Function Theorem and express the set of points $x$ in a neighborhood of $q^*(r')$ that maintain $\nabla_q J^\ast (r , x) = 0$ as a smooth function of $r$, i.e., there exists a neighborhood of $r'$ where $x = g(r)\in C^\infty$  \cite{lang2012fundamentals}.
Since we assume that for all $r$ in our domain there exists a unique minimizer we conclude that $g(r) = q^*(r)\in C^{\infty}$.
Moreover, using Lemma \ref{lem:HsmoothinQ} it follows that $H^\ast(r)=H'(q^\ast(r))\in C^\infty$ as well.

\end{proof}

\begin{remark}
Note that $(r, q^*(r))$ is a local minimum as $r$ belongs to the uniqueness domain. As a consequence, the function $J^*$ is locally convex at that point. Thus, the condition that the Hessian of $J^*$ at $(r, q^*(r))$ is invertible implies that at this minimum there is no direction for which the second derivative vanishes.
So, when the condition is not met, there should exist a sectional curve of $J^*$ that has vanishing first and second derivatives.
When the data is sampled at random, this seems to be unlikely.
In any case, this condition can be verified numerically and in all of our experiments, this condition is met.
\end{remark}

\subsubsection{Smoothness and approximation order of the MMLS}
\label{sec:MMLSResults}
After establishing the fact that the coordinate system varies smoothly, we turn to the final phase of this discussion, which is the smoothness and approximation order arguments regarding the approximant, resulting from the two-folded minimization problem presented in equations \eqref{eq:Step1Minimization}-\eqref{eq:Step2}. Initially, we wish to approve the fact that the local coordinate system (found by the solution to the minimization problem) is a valid domain for the polynomial approximation performed in Step 2. Ideally, we would have liked to obtain the tangent space of the original manifold as our local coordinate system, in order to ensure the validity of our coordinate system. In Lemma \ref{lem:HapproximationNoise} above we establish the fact that $H^*$ approximates the tangent space even without the uniqueness domain of Assumption \ref{eq:uniqueAssume}.
Thus, our choice of coordinate can be considered as a feasible choice for a local domain. 

Below, we utilize the results articulated in the preliminaries section (i.e., theorems \ref{thm:SmoothMLSfunctions} and \ref{thm:OrderMLSfunctions}) to show that we project the points onto a $C^\infty$ manifold, and that given clean samples of $\MM$,  these projections are $\OO(h^{m+1})$ away from the original manifold $\MM$.

Prior to asserting the theorems which deal with approximation order and smoothness, we wish to remind the reader that the approximating manifold is defined as
\begin{equation}
\tilde \MM = \lbrace P_{m}(x) ~\vert~ x \in \MM \rbrace
\label{def:S}
,\end{equation} 
where $P_{{m}}(x)$ is the ${m}^{th}$ degree moving least-squares projection described in equations \eqref{eq:Step1Minimization}-\eqref{eq:Step2}. The following discussion will result in proving that $\tilde \MM$ is indeed a $d$-dimensional manifold, which is $C^\infty$ smooth and approximates the sampled manifold $\MM$. We prove that $\tilde \MM$ is a $d$-dimensional manifold by showing that $P_{{m}}:\MM\rightarrow\tilde \MM$ is diffeomorphic \textit{almost everywhere} (i.e., Lebesgue measure zero set).

For convenience, in the following Lemmas we want to refer to the affine space $H^*$ of Equation \eqref{eq:Step1Minimization}, as an element in the Grassmannian $Gr_d(\RR^n)$.
Accordingly, if $H^* = q^* + span\{e_k\}_{k=1}^d$ we denote its counterpart in the Grassmannian by $\GG H^* \defeq span\{e_k\}_{k=1}^d$.
Explicitly, we look at the pairs $(q^*, \GG H^*)$ as belonging to the product space $\RR^n\times Gr_d(\RR^n)$.

In some of the following Lemmas and Theorems intend to use the same set of conditions and definitions (inherited from Lemma \ref{lem:HapproximationNoise}).
To avoid unnecessary repetitions we wish to state these upfront and whenever they are utilized we will state that the \textit{injectivity conditions} hold.
\subsubsection*{Injectivity Conditions}
\label{sec:Injectivity}
\begin{enumerate}
    \item $\norm{n_i} < \sigma = c_1 h < \mu $, for some constant $c_1$ and $\mu$ from constraint \ref{init_constraint:search} of Equation \eqref{eq:Step1Minimization}.
    \item The function $\theta(t)$ of Equation \eqref{eq:Step1Minimization} is monotonically decaying and compactly supported with $supp(\theta) = c_2 h$, where $c_2$ is some constant greater than $2 \sqrt{1 + c_1^2} + (1+c_1)$. 
    \item Suppose that $\theta(c_2h)>c_3>0$, for some constant $c_3$.
    \item Set $\mu = rch(\MM)/2 $ in constraint \ref{init_constraint:search} of Equation \eqref{eq:Step1Minimization}.
    \item Let $r$ be such that $d(r,\MM) < rch(\MM)/4$
\end{enumerate}
Note that in order to have sufficient conditions for the injectivity results we need to limit the amount of noise below $\sigma$ to a level of $\OO(h)$.
This does not mean that the injectivity will necessarily break in a more noisy setting.
As a matter of fact, our experiments show that even in much noisier cases the approximant is still a manifold.
Nevertheless, the proofs below rely on the assumption that the noise decay as the fill distance tends to zero.

\begin{Lemma}[Injectivity of $(q^*, H^*)$]
Let the Noisy Sampling Assumptions of Section \ref{sec:NoisySampling} as well as the Injectivity Conditions of Section \ref{sec:Injectivity} hold.
Denote by $(q^*(p), H^*(p))\in \RR^n\times Gr_d(\RR^n)$ the minimizers of Equation \eqref{eq:Step1Minimization} for $p\in\MM$, with respect to the sample set $\tilde R$. 
Then there exists $h_0$ such that for all $h\leq h_0$ the map
\[
(q^*, \GG H^*):\MM\rightarrow\RR^n \times {Gr}_d(\RR^n)
,\]
is injective.
\label{lem:qInjectivity}\end{Lemma}
\begin{proof}
Let $p_1, p_2 \in \MM$, we wish to show that if $(q^*(p_1), \GG H^*(p_1)) = (q^*(p_2), \GG H^*(p_2)) = (q^*, \GG H^*)$ it immediately follows that $p_1 = p_2$. 
From Lemma \ref{lem:HapproximationNoise} we know that 
\[
\lim_{h\rightarrow 0} q^*_h = p_1;
\text{ and }
\lim_{h\rightarrow 0} q^*_h = p_2
.\]
Thus, in the limit $p_1 = p_2$ must exist.
Let $h$ be fixed, then $q^*_h = p_1 + \epsilon_1$ and $q^*_h = p_2 + \epsilon_2$, and we denote $\eps_1 = \norm{\epsilon_1}$, $\eps_2 = \norm{\epsilon_2}$.
Without limiting the generality assume $\eps_1 \leq \eps_2 \eqdef \eps$, then
\begin{equation*}
    \norm{p_1 - p_2} = \OO(\eps)
.\end{equation*}
Furthermore, from \eqref{eq:HapproximationNoise3} we know that
\begin{equation*}
    \norm {P_{H_h^*(r)} -P_{T_{p_1}\MM}}_{op} = \OO(h + \eps^2)
\end{equation*}

So, if we denote an orthonormal basis of $\GG T_{p_1}\MM$ by $\{e_k\}_{k=1}^d$ then there exists a basis $\{e'_k\}_{k=1}^d$ of $\GG H^*$ 
\[
e'_k = e_k + \delta_k
,\]
where $\norm{\delta_k}= \OO(h + \eps^2)$.
Furthermore, using Taylor expansion we know that
\[
p_2 = p_1 + \sum_{k=1}^d x_k e_k + n  
,\]
where $n\in \GG T_{p_1}\MM^\perp$.
And, if we denote $\vec x = (x_1,\ldots, x_d)$ then
\[\norm{n}= \OO(\norm{\vec x}^2)\]

Since $p_2 - q^* \perp H^*$ we get for $j=1,\ldots,d$
\begin{equation*}
    \langle p_2 - q^*, e'_j \rangle = 0 
\end{equation*}
\begin{equation*}
\langle p_1 + \sum_{k=1}^d x_k e_k + n - q^*, e_j + \delta_j \rangle = 0    
\end{equation*}
\begin{equation*}
\langle p_1 - q^*, e_j  + \delta_j\rangle + \langle \sum_{k=1}^d x_k e_k , e_j + \delta_j \rangle + \langle n, e_j + \delta_j \rangle = 0
,\end{equation*}
\begin{equation*}
\langle p_1 - q^*, e'_j \rangle + \langle \sum_{k=1}^d x_k e_k , e_j + \delta_j \rangle + \langle n, e_j + \delta_j \rangle = 0
.\end{equation*}
Since $p_1 - q^*\perp H^*$ as well we get that
\begin{equation*}
\langle \sum_{k=1}^d x_k e_k , e_j + \delta_j \rangle + \langle n, e_j + \delta_j \rangle = 0
.\end{equation*}
\begin{equation*}
x_j +\langle \OO(\norm{\vec x}) ,  \delta_j \rangle + \langle \OO(\norm{\vec x}^2), \delta_j \rangle = 0
,\end{equation*}
\begin{equation*}
    x_j + \OO(\norm{\vec x})\OO(h + \eps^2) +  \OO(\norm{\vec x}^2)\cdot\OO(h + \eps^2) = 0
\end{equation*}
and so for all $j=1,\ldots,d$
\begin{equation}\label{eq:xjFirst}
x_j =  \OO(h + \eps^2) [\OO(\norm{\vec x})+  \OO(\norm{\vec x}^2)]
.\end{equation}
On the other hand,
\begin{equation*}\label{eq:xjSecond}
    x_j = \langle p_2 - p_1, e_j\rangle =\langle\OO(\eps), e_j\rangle =\OO(\eps)
.\end{equation*}
Thus, we know that 
\[
\norm{x} = \OO(\eps)
.\]
From \eqref{eq:xjFirst} we get that for small enough $h, \eps, \norm{x} $ there exists constants $c_1, c_2, c_3 > 0$ such that
\[
\norm{x} \leq \sqrt{d} c_1 (h + \eps^2)(c_2\norm{x} + c_3\norm{x}^2)
,\]
which results in $\norm{x} = 0$ or
\[
\frac{1 - \sqrt{d}c_1(h+\eps^2)c_2}{\sqrt{d}c_1(h+\eps^2)c_3}\leq \norm{x}
.\]
However, in the latter,  the left hand side will tend to $\infty$ as $h$ and $\eps$ approach zero, whereas the right hand side is $\OO(\eps)$.
Therefore, $\norm{x}=0$ has to hold and so $$p_1 = p_2.$$
\end{proof}

We now wish to show that the entire MMLS projection is injective as well. 
To achieve this we wish to show that for small enough $h$ the MMLS projection keeps the points inside the uniqueness domain.
As can be seen in Equation \eqref{eq:Step2Projection} the second step of the MMLS procedure computes a least-squares polynomial and then takes its value at zero to be the projection of the point.
In other words, if we represent the approximating polynomial in the monomial basis of $\Pi_m^d$ - i.e., $\mathcal{B} = \{1, x_1, ..., x_d, x_1^2, x_1 x_2, ...\}$
\[
p(x) = \sum_{k}a_k \phi_k(x) ~,~ \text{for } \phi_k\in\mathcal{B}
,\]
then the projection is merely the constant term $a_0$.
Thus, we first show that as the fill distance $h$ tends to zero, the constant term in the local polynomial approximation tends to the average of the approximated points.
As a result, we get that for small enough $h$ the MMLS projection will yield a point in the uniqueness domain.
This fact will be the key to showing the injectivity of the procedure.

\begin{Lemma}[Least-Squares constant term convergence to average]
Let $\{f_i\}_{i=1}^I$ be a set of samples from a function $f:\RR^d\rightarrow \RR$ taken at locations $\{x_i\}_{i=1}^I\subset \RR^d$. 
Let $\theta_i = \theta(\norm{x_i})$ be a set of weights with a compact support of size $c h$. 
Let  $p^*\in \Pi_m^d$ be the minimizer of
\begin{equation}\label{eq:WeightedLS}
    p^* = \argmin_{p\in\Pi_m^d} \sum_{i=1}^I \norm{f_i - p(x_i)}^2 \theta_i
,\end{equation}
and let $a^*_0$ be the constant term of $p^*$ when it is written in the monomial basis $\mathcal{B}$.
Assume that the least-squares problem of Equation \eqref{eq:WeightedLS} is well conditioned (i.e., the least-squares matrix is invertible).
Then, as $h\rightarrow 0$ we have 
\begin{equation}\label{eq:ConstantTermAvg}
    a^*_0 = \sum_{i=1}^I f_i \omega_i + \OO(h)
,\end{equation}
where
\[
\omega_i = \frac{\theta_i}{ \sum_{i'=1}^I  \theta_{i'} }
\]
\end{Lemma}

\begin{proof}
Let us write the function that we wish to minimize:
\[
J = \sum_{i=1}^I \norm{f_i - p(x_i)}^2 \theta_i = 
\sum_{i=1}^I (f_i - a_0 - a_1\phi_1 (x_i) - ... - a_N\phi_N (x_i))^2 \theta_i
,\]
where $\phi_k$ are monomials in the basis $\mathcal{B}$.

The polynomial $p^*(x) = \sum_{k=0}^N a^*_k \phi_k(x)$ minimizing $J$ maintains $\nabla J(a^*_0,...,a^*_N) = 0$ and so
\[
0 = \frac{\partial J}{\partial a_0}(a^*_0,...,a^*_N) = -2 \sum_{i=1}^I (f_i - a^*_0 - a^*_1\phi_1 (x_i) - ...- a^*_N\phi_N (x_i))\theta_i
\]
\[
\sum_{i=1}^I (f_i - a^*_0 +\OO(h))\theta_i = 0
\]
\[
\sum_{i=1}^I f_i \theta_i + \OO(h)=  \sum_{i=1}^I  a^*_0 \theta_i
=  a^*_0 \sum_{i=1}^I  \theta_i 
,\]
and so
\[
a^*_0  = \sum_{i=1}^I f_i \cdot \omega_i + \OO(h)
\]
as required.
\end{proof}

\begin{corollary}
$a^*_0$ of Equation \eqref{eq:ConstantTermAvg} is nearly a weighted average of the samples for a small enough $h$.
\end{corollary}

Notice that $p_r:\RR^d\rightarrow \RR^n$ the polynomial minimizing Equation \eqref{eq:Step2} is comprised of a different scalar-valued polynomial at each coordinate $p_r^j$ for $j=1,...,n$.
Moreover, each polynomial $p_r^j:\RR^d \rightarrow \RR$ solves a minimization problem such as the one portrayed in Equation \eqref{eq:WeightedLS}.
Thus, each coordinate of the vector $p_r(0)$ is nearly a weighted average of the samples projected onto this coordinate, for a small enough $h$.

\begin{corollary}
For a small enough $h$ the vector $p_r(0)$ is nearly a weighted average the samples $r_i \in B_{c_1 h}(q^*)\cap \tilde R$ and, thus, it is $\OO(h)$ away from their convex hull. 
\label{cor:qinUnique}\end{corollary}

Since the samples are taken from a manifold (with an additive noise that decays as $h\rightarrow 0$), it is $\OO(h)$ away from some tangent space; i.e., a flat.
Therefore, the vector $p_r(0)$ will remain in $U_{unique}$ the uniqueness domain of Assumption \ref{eq:uniqueAssume} for a sufficiently small $h$, as $U_{unique}\subset U_{reach}$ and $\tilde R\subset U_{unique}$ for small enough $h$.

\begin{corollary}\label{cor:PmInUnique}
For $r$ in the uniqueness domain $U_{unique}$ and $h$ small enough we get that $P_m(r)\in U_{unique}$ as well. 
\end{corollary}

After achieving this we can now turn to prove that the MMLS projection $P_m$ is indeed injective.
\begin{Theorem}[MMLS injectivity]\label{thm:MMLSInjective}
Let the Clean Sampling Assumptions as well as the Injectivity conditions hold.
Then, there exists $h_0$, such that for all $h\leq h_0$
\begin{equation*}
    P_m:\MM\rightarrow \tilde \MM
,\end{equation*}
is injective.
\end{Theorem}
\begin{proof}
The mapping $P_m:\MM\rightarrow\tilde\MM$ can be broken into a composition of two mappings.
Namely,
\[
r \stackrel{\text{Step 1}}{\mapsto}  (q^*_r, \GG H^*_r) \stackrel{\text{Step 2}}{\mapsto} p_r(0) = \tilde r 
,\]
where $r\in\MM$, $(q^*_r, \GG H^*_r)\in\RR^n\times Gr_d(\RR^n)$ the minimizers of Equation \eqref{eq:Step1Minimization} , and  $p_r(0)\in\tilde\MM$ from Equation \eqref{eq:Step2Projection}.

From Lemma \ref{lem:qInjectivity} we get that the first step is injective, thus if we show that the second step is injective the proof is concluded.
Explicitly, we wish to show that if $(q_{r_1}^*, \GG H^*_{r_1})$ and $(q_{r_2}^*, \GG H^*_{r_2})$ map to the same $\tilde r$, then it follows that $(q_{r_1}^*, \GG H^*_{r_1}) = (q_{r_2}^*, \GG H^*_{r_2})$.
Assume that we have $(q_{r_1}^*, \GG H^*_{r_1})$ and $(q_{r_2}^*, \GG H^*_{r_2})$ mapping to the same $\tilde r$.
Thus, $P_{H^*_{r_1}}(\tilde r) = q^*_{r_1} $ and $P_{H^*_{r_2}}(\tilde r) = q^*_{r_2} $.
In other words 
$$\tilde r - q^*_{r_1} \perp H^*_{r_1},$$ 
and 
$$\tilde r - q^*_{r_2} \perp H^*_{r_2}.$$
However, by Corollary \ref{cor:PmInUnique} $\tilde r$ is in the uniqueness domain $U_{unique}$, and so, according to Lemma \ref{lem:ProjectionUniqueness} we get:
\[
(q_{r_1}^*, \GG H^*_{r_1}) = (q_{r_2}^*, \GG H^*_{r_2})
,\]
as required.
\end{proof}

\begin{remark}
The demand that $\theta(x)$ should be compactly supported can be relaxed to be a fast decaying weight function. However, working with this condition complicates the argumentation, hence, we preferred clarity over generality.
\label{rem:NonCompactTheta}
\end{remark}

Upon obtaining these results, we are now prepared to move on to one of the main results of this article. Namely, in the following theorem, we show that $\tilde \MM$ is an approximating $d$-dimensional manifold, which is $C^\infty$ smooth.

\begin{Theorem}[MMLS is a smooth manifold]
Let the Noisy Sampling Assumptions of Section \ref{sec:NoisySampling}, as well as the Injectivity Conditions of Section \ref{sec:Injectivity} hold. Let the data points be distributed such that the minimization problem of Equation \eqref{eq:Step1Minimization} is well conditioned locally (i.e., the least-squares matrix is invertible). 
Let $\left( \frac{\partial^2 J^{\ast}}{\partial q_i \partial q_j}\right)_{ij}$ be invertible at $(p,q^*(p))$ for all $p\in\MM$ (where $J^\ast(r; q)$ is the function described in equation \eqref{eq:JofQ}).
Then the MMLS procedure of degree ${m}$ described in equations \eqref{eq:Step1Minimization}-\eqref{eq:Step2} projects any $r\in U_{unique}$ onto $\tilde \MM$ an almost everywhere $d$-dimensional submanifold of $\RR^n$. 
Furthermore, $\tilde \MM$ is $C^{\infty}$ smooth.
\label{thm:ManifoldMMLS}\end{Theorem}

\begin{proof}
The proof comprises the following arguments:
\begin{enumerate}
\item $P_m:\MM \rightarrow \RR^n$ is a $C^\infty$ function.
\item $P_m$ is almost everywhere diffeomorphism; thus, $\tilde \MM$ is a smooth manifold almost everywhere.
\item $\forall r \in U_{unique},~ P_{m}(r) \in \tilde \MM $.
\end{enumerate}
Note that by Theorem \ref{thm:SmoothCoordinates} we know that $H^*(r)$ varies smoothly with respect to $r$.
And $H^*(r)$ is a varying coordinate system for the second minimization step \eqref{eq:Step2}, which is merely a weighted least-squares function approximation.
We now refer the reader to the proof of Theorem \ref{thm:SmoothMLSfunctions} given in \cite{levin1998MLSapproximation}, where the MLS approximation of functions is presented as a multiplication of smoothly varying matrices (under the assumption that $\theta\in C^\infty$).
In the case of function approximation, discussed in \cite{levin1998MLSapproximation}, we have the same coordinate system for each point $x$ in the domain.
Our case differs in the fact that for each point $r$ we have a different coordinate system.
Nevertheless, this coordinate system varies smoothly with respect to $r$, and thus, this multiplication of matrices from \cite{levin1998MLSapproximation} still varies smoothly.
As such, the procedure yields a smooth approximation and $P_m:\MM\rightarrow \RR^n$ is a smooth function, and the first claim is proven.

From Theorem \ref{thm:MMLSInjective}, we know that $P_m:\MM\rightarrow \tilde \MM$ is injective.
Since $P_m$ is smooth we can apply Sard's Theorem and get that the differential of $P_m$ is non-degenerate almost everywhere.
Thus, we can apply the Inverse Function Theorem and get that $P_m$ is diffeomorphic almost everywhere. Hence the second claim is achieved.

Finally, let $r\in U_{unique}$ and by Corollary \ref{cor:qinUnique} we know that $q^*(r)\in U_{unique}$ as well. 
Then, there exists a point $p\in\MM$ such that $p - q^*(r) \perp H^*(r)$, since $\norm{P_{H^*(r)} - P_{T_{P(r)}\MM}}_{op} = \OO(h)$ by Lemma \ref{lem:HapproximationNoise}.
As a result of Lemma \ref{lem:ProjectionUniqueness} we get that $P_m(r)\in\tilde \MM$. 

\end{proof}

Lastly, we show that given clean samples of $\MM$ we achieve that $\tilde \MM$ approximates $\MM$ up to the order of $\OO(h^{{m}+1})$. 

\begin{Theorem}[MMLS approximation order]
Let the Clean Sampling Assumptions of Section \ref{sec:CleanSampling} hold.
Assume further that $\MM\in C^{m+1}$.
Then, for fixed $\rho$ and $\delta$, there exists a fixed $k>0$, independent of $h$, such that the MMLS approximation for $\theta$ with a finite support of size $s =k h$ yields, for a sufficiently small $h$, the following error bound:
\begin{equation*}
\norm{\tilde \MM_m - \MM}_{\textrm{Hausdorff}} <  M \cdot h^{{m}+1}
,\end{equation*}
where 
\[
\norm{\tilde \MM_m - \MM}_{\textrm{Hausdorff}} = \max\lbrace\max_{s\in\tilde \MM_m} d(s, \MM), \max_{x\in\mathcal{M}}d(x, \tilde \MM_m)\rbrace
\]
$\tilde \MM_m$ is the $m^{th}$ degree MMLS approximation of $\MM$ and $d(p,\NN)$, is the Euclidean distance between a point $p$ and a manifold $\NN$.
\label{thm:OrderMMLS}\end{Theorem}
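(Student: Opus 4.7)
The plan is to reduce the manifold approximation bound to the scalar multivariate MLS approximation-order result of Theorem \ref{thm:OrderMLSfunctions}, applied coordinate-wise. Fix an arbitrary $r \in \MM$ and invoke Step 1 of the MMLS to obtain the local affine chart $(H^\ast(r), q^\ast(r))$. By Lemma \ref{lem:LinearApproximation}, every sample $r_i$ in the $O(h)$-support of the weight function satisfies $d(r_i, H^\ast) = O(h^2)$. This forces the angle between $H^\ast$ and $T_{p(r)}\MM$ to be $O(h)$ for $h$ sufficiently small, so (since $\MM \in C^{m+1}$) the manifold is the graph of a unique vector-valued function $\vec{f}: H^\ast \to H^{\ast\perp}$ of class $C^{m+1}$ in an $O(h)$-neighborhood of $q^\ast$, with $C^{m+1}$-norm bounded uniformly in $r$ by the curvature and reach of $\MM$.

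The orthogonality constraint $r - q^\ast \perp H^\ast$ together with $r \in \MM$ and uniqueness of the local graph give $\vec{f}(0) = r$, where $0$ denotes the image of $q^\ast$ in the $H^\ast$-coordinates. Thus the samples near $q^\ast$ can be written as $(x_i, \vec{f}(x_i))$, with $x_i$ the orthogonal projections of $r_i$ onto $H^\ast$. Because the projection $\MM \to H^\ast$ is a near-isometry (its Jacobian differs from the identity by $O(h)$ on the relevant neighborhood), the $\{x_i\}$ inherit, with possibly adjusted constants, the $h$-$\rho$-$\delta$ structure of $\{r_i\}$.

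Step 2 of the MMLS is then literally a vector-valued weighted polynomial least-squares on clean samples $(x_i, \vec{f}(x_i))$. Applying Theorem \ref{thm:OrderMLSfunctions} in each of the $n$ coordinates (using the remark that permits weights of the form $\theta(\|r_i - q\|)$ instead of $\theta(\|x_i\|)$ as long as the weight decays fast enough), we obtain, for a fixed $k$ and $\theta$ of support $s=kh$,
\[
\|P_m(r) - r\| \;=\; \|\vec{g}(0) - \vec{f}(0)\| \;\le\; M\, h^{m+1},
\]
with $M$ independent of $r$, because the constants in Theorem \ref{thm:OrderMLSfunctions} depend only on $\rho, \delta$ and on $\|\vec{f}\|_{C^{m+1}}$, all of which are uniformly controlled.

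The Hausdorff bound is then immediate: every $s \in \mathcal{S}_m$ equals $P_m(x)$ for some $x \in \MM$, so $d(s, \MM) \le \|s-x\| \le Mh^{m+1}$; conversely, for any $x \in \MM$, $P_m(x) \in \mathcal{S}_m$ satisfies $d(x, \mathcal{S}_m) \le \|x-P_m(x)\| \le Mh^{m+1}$. The main obstacle I expect is not the MLS estimate itself but the uniformity of the graph representation: one must verify that the angle between $H^\ast(r)$ and $T_{p(r)}\MM$, together with the $C^{m+1}$-norm of the implicit parametrization $\vec{f}$, can be bounded uniformly across $r \in \MM$. The sharpened Lemma \ref{lem:LinearApproximation} supplies the angular bound, but a careful implicit-function-theorem argument (with constants tied to the reach of $\MM$) is needed to make the passage from local charts to a global $r$-independent $M$ rigorous.
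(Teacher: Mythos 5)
Your proposal follows essentially the same route as the paper's proof: use Lemma \ref{lem:LinearApproximation} to certify that $H^\ast(r)$ is a valid local coordinate system over which $\MM$ is a graph (with the projected samples still forming an $h$-$\rho$-$\delta$ set), apply Theorem \ref{thm:OrderMLSfunctions} to the resulting vector-valued local function approximation of Step 2, and then deduce both directions of the Hausdorff bound. You are in fact more explicit than the paper on two points it glosses over --- that the orthogonality constraint $r-q^\ast\perp H^\ast$ forces $\vec{f}(0)=r$, yielding the pointwise bound $\norm{P_m(r)-r}\le M h^{m+1}$, and that $M$ must be controlled uniformly over $r\in\MM$ --- so the argument is sound and, if anything, more complete.
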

\begin{proof}
Let $r\in\MM$, then, from Equation \eqref{eq:HapprpoxR} in the proof of Lemma \ref{lem:HapproximationNoise}, $H^\ast(r)$  approximates the sample set $\lbrace r_i \rbrace$ up to the order of $\OO(h^2)$ in an $\OO(h)$ neighborhood of $q^\ast(r)$. Therefore, the projections of $r_i$ onto $H^\ast(r)$ are also an $\tilde{h}$-$\tilde{\rho}$-$\tilde{\delta}$, where $\tilde{h} = \OO(h)$ (and $\tilde{\rho}\approx \rho$, $\tilde{\delta}\approx \delta$) for $h$ small enough. 
According to Theorem \ref{thm:OrderMLSfunctions}, as the projection $P_{{m}}(r)\in \tilde \MM_m$ is merely a local polynomial approximation of $\MM$ we achieve that $P_{{m}}(r)$ is $\OO(h^{{m}+1})$ away from the manifold $\MM$. Accordingly, for all $r\in\MM,$ $d(r,\tilde \MM_m) \leq \OO(h^{{m}+1})$. Furthermore, for each $s\in \tilde \MM_m$ there exists a point $r\in\MM$ such that $s = P_{{m}}(r)$ which is $\OO(h^{{m}+1})$ away from $\MM$. Thus, for all $s\in\tilde \MM_m,$ $d(s,\mathcal{M}) \leq \OO(h^{{m}+1})$ as well, and the theorem follows. 
\end{proof}

\begin{remark}
Although entire Section \ref{sec:ManifoldMLS} was pronounced using the standard Euclidean norm, all of the definitions, development, and proofs are applicable for the general case of an inner product norm of the form $\norm{\cdot}_A = \sqrt{x^T A x}$. Where $A$ is a symmetric positive definite matrix. 
\end{remark}

\section{Numerical examples}
\label{sec:examples}
In this section we wish to present some numerical examples which demonstrate the validity of our method. In all of the following examples we have implemented Step 1 as described in Section \ref{alg:Step1} using just three iterations. The weight function utilized in all of the examples (and many others omitted for brevity) is $\theta(r) = e^{- \frac{r^2}{\sigma^2}}$, where the $\sigma$ was approximated automatically using a Monte-Carlo procedure:
\begin{enumerate}
\item Choose $100$ points from $\lbrace r_i \rbrace_{i=1}^I$ randomly
\item For each point:
    \begin{itemize}
    \item Calculate the minimal $\sigma$ such that the least-squares matrix is well conditioned (in fact we chose $10$ times more points than needed).
    \end{itemize}
\item Take the maximal $\sigma$ from the $100$ experiments.
\end{enumerate}

As stated above, although many of our claims rely upon a compactly supported weight function, the results still hold a weight function that decays fast enough (in our case we used an exponential decay). 

\subsection*{$1$-dimensional helix experiment}
In this experiment we have sampled 400 equally distributed points on the helix $(sin(t), cos(t), t)$ for $t \in [-\pi, \pi]$ (Fig. \ref{fig:HelixSteps}A) with uniformly distributed (between $-0.2$ and $0.2$) additive noise (Fig. \ref{fig:HelixSteps}B). In all of the calculations we have used the Mahalanobis norm, which is of the type $\sqrt{x^T A x}$, instead of the standard Euclidean. Assigning $d=1$ (i.e., the manifold's dimension), we projected each of the noisy points and the approximation can be seen in Fig. \ref{fig:HelixSteps}C. The comparison between the approximation and the original as presented in Fig. \ref{fig:HelixSteps}D speaks for itself. 
\begin{figure}[ht]
\begin{centering}
\includegraphics[width={0.9\linewidth}]{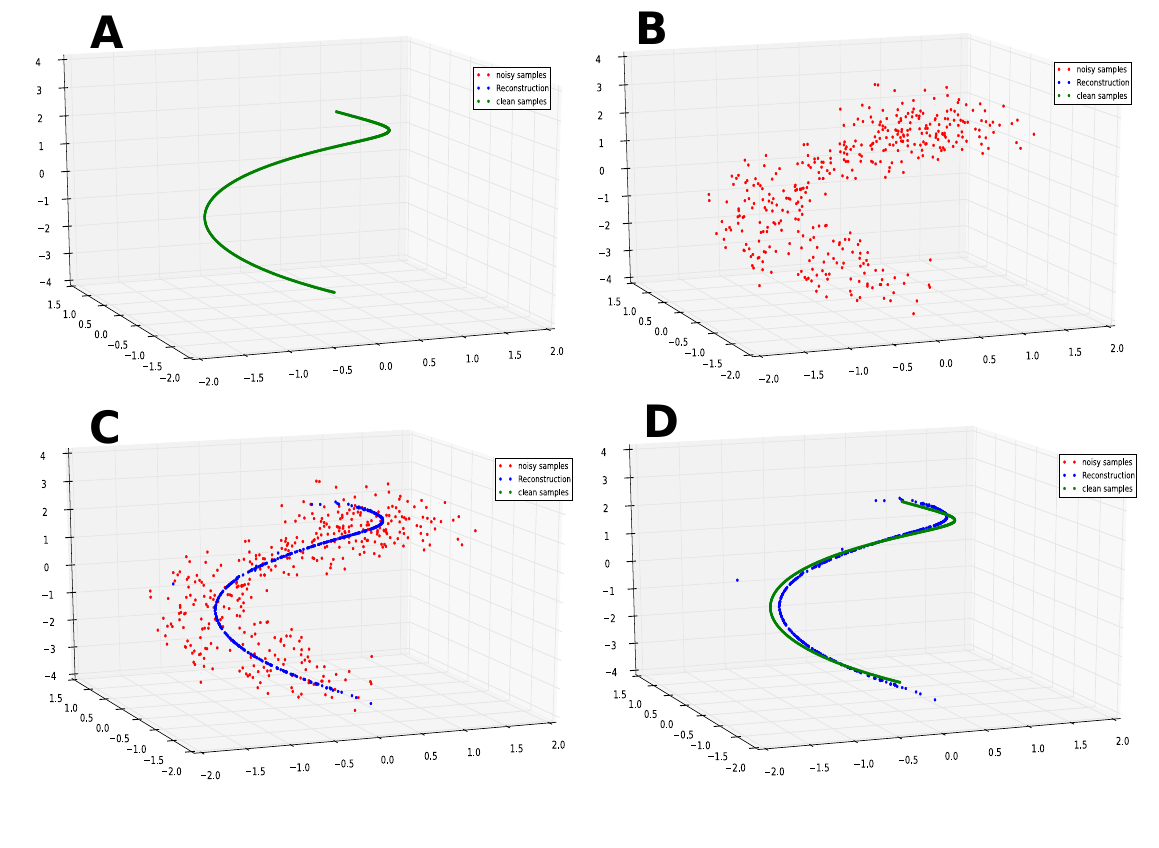}
\par\end{centering}
\caption{Approximation of $1$-dimensional helix. (A) clean samples (green); (B) noisy samples (red), after adding noise distributed $U(-0.2,0.2)$; (C) the approximation (blue) overlaying the noisy samples (red); (D) comparison between the approximation (blue) and the original clean samples (green)}
\label{fig:HelixSteps}\end{figure}

\subsection*{Ellipses experiment}
Here we sampled $144$ images of ellipses of size $100 \times 100$. The ellipses were centered and we did not use any rotations. Thus, we have $144$ samples of a $2$-dimensional submanifold embedded in $\RR^{10000}$. We have added Gaussian noise $\NN(0,0.05)$ to each pixel in the original images (e.g., see Fig. \ref{fig:NoisyEllipses}). One of the phenomena apparent in $n$-dimensional data is that if we have a very small random noise (i.e., bounded by $\epsilon$) entered at each dimension, the noise level (in the norm) is augmented approximately by a factor of $\sqrt{n}$. In our case, the noise bound is of size $100 \times 0.05 = 5$, whereas the typical distance between neighboring images is approximately $2.5-3$. Therefore, if we use the standard Euclidean norm the localization is hampered. In order to overcome this obstacle, we have used a $100$ dimensional distance. Explicitly, we have performed a pre-processing randomized SVD and reduced the dimensionality to $50$ times the intrinsic dimension. The reduced vectors were used just for the purpose of distance computation in the projection procedure process. Several examples of projections can be seen in Fig. \ref{fig:EllipsesProjections}. An example of the 2-dimensional mapping of the 144 samples projected onto $H$ is presented in \ref{fig:EllipsesH}.
\begin{figure}[ht]
\begin{centering}
\includegraphics[width={0.6\linewidth}]{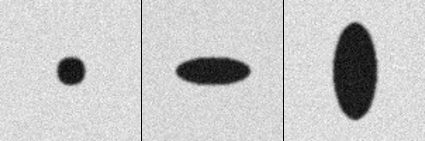}
\par\end{centering}
\caption{Examples of noisy ellipses. The noise is normally distributed $\NN(0,0.05)$ at each pixel.}
\label{fig:NoisyEllipses}\end{figure}

\begin{figure}[ht]
\begin{centering}
\includegraphics[width={0.6\linewidth}]{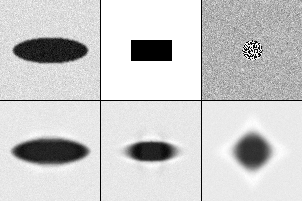}
\par\end{centering}
\caption{Projections on the ellipses $2$-dimensional manifold. Upper line: vectors that were projected. Lower line: the projections of the upper line on the ellipses manifold.}
\label{fig:EllipsesProjections}\end{figure}

\begin{figure}[ht]
\begin{centering}
\includegraphics[width={0.6\linewidth}]{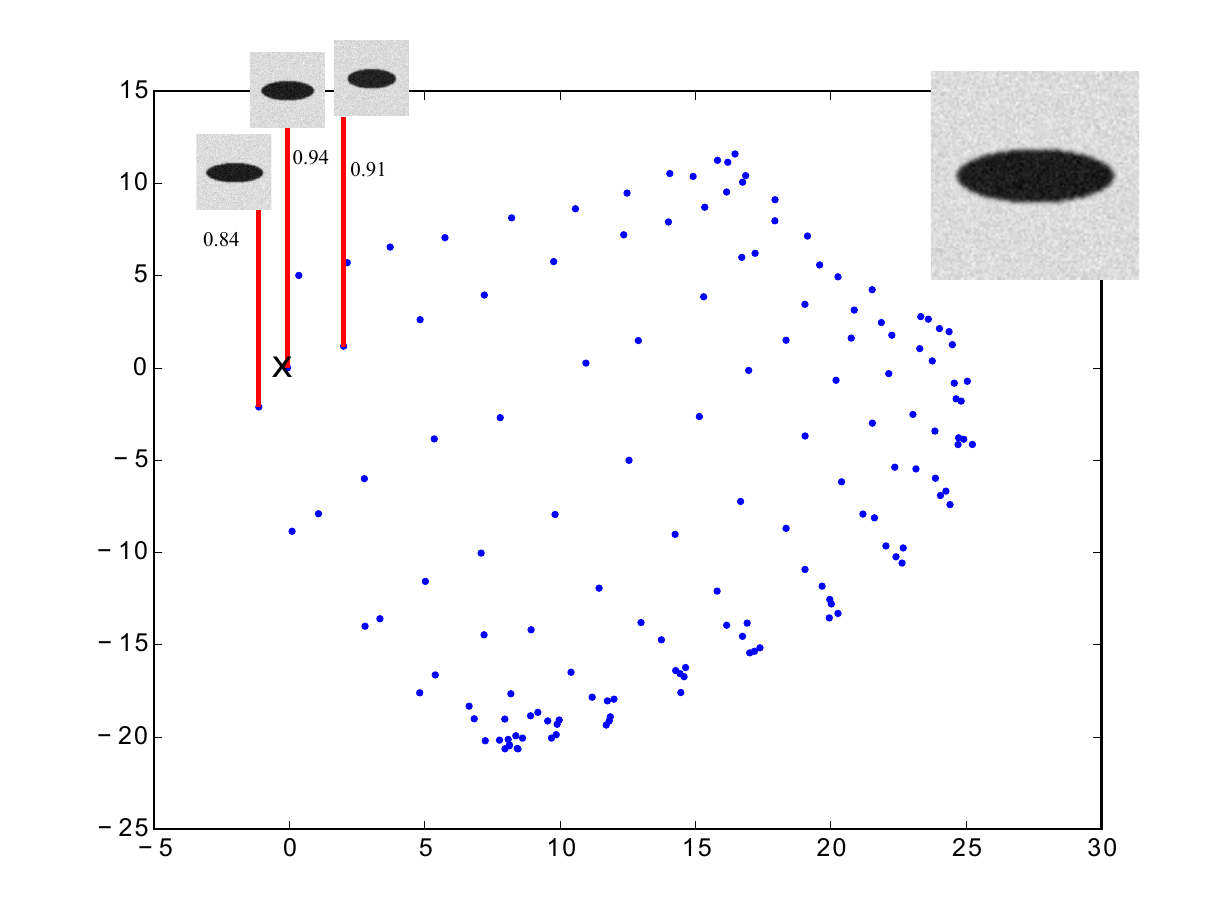}
\par\end{centering}
\caption{Mapping the ellipses $2$-dimensional manifold onto the coordinate system $H$. In the right upper corner we see the object we wish to project (i.e., $r$). Marked in $\times$ is the local origin $q$ and some nearby objects from the sampled data alongside their relative weights}
\label{fig:EllipsesH}\end{figure}

\subsection*{Comparison with PCA}
Although the main achievement of the MMLS method lies in the approximation of general nonlinear manifolds, we conducted a comparison between the MMLS and PCA. 
This comparison was carried as a baseline sanity check in order to verify that our methodology compares well with the most well known technique for linear manifolds.
We have examined two test cases. The first case deals with linear data (with and without noise) and the second deals with the relative simple nonlinear case of a sphere in $\RR^3$. 

\subsubsection*{Case 1 -- Linear setting}
\begin{itemize}
    \item Choose 3 random axes in $\RR^{50}$ (denoted by $u_1, u_2$ and $u_3$).
    \item Take $125=5^3$ uniformly distributed samples of a 3d linear subspace embedded in $\RR^{50}$ (the samples are denoted $\{p_i\}_{i=1}^{125}$).
    Explicitly the $u_1, u_2, u_3$ coordinates of each $p_i$ are distributed as $\UU(-0.5,0.5)$. We denote by $U$ the space spanned by $u_1, u_2, u_3$.
    \item Add Gaussian noise $\epsilon_i\in\RR^{50}$ distributed in each coordinate as $\sim \NN(\mu = 0, \sigma = 0.3)$ for each point. That is, the noisy sample set $\{r_i\}_{i=1}^{125}$ is defined by $r_i = p_i +\epsilon_i$.
    \item Compute the leading $d$ principal components of the data set (after subtracting the sample mean). We denote by $U_{\text{PCA}}$ the space spanned by these $d$ leading principal components.
    \item Measure the PCA error by
    \[
        E_{\text{PCA}} = \frac{1}{125}\sum_{i=1}^{125} dist(P_{{\text{PCA}}}(r_i), P_{U}(P_{{\text{PCA}}}(r_i)))
    ,\]
    where $P_{\text{PCA}}(x), P_{U}(x)$ denote the projections of a point $x$ onto $U_{\text{PCA}}$ and $U$ respectively.
    \item Compute the 1-degree MMLS approximation of all points (denoted by $P_{\text{MMLS}}(r_i)$ for all $i=1,...,125$).
    \item Measure the MMLS error by
    \[
        E_{\text{MMLS}} = \frac{1}{125}\sum_{i=1}^{125} dist(P_{{\text{MMLS}}}(r_i), P_{U}(P_{{\text{MMLS}}}(r_i)))
    .\]
    \item In order to obtain statistics, this experiment have been performed 50 times and we calculated the average errors and standard deviations.
\end{itemize}
The experiment's results yielded $E_{\text{PCA}} = 0.59508 \rpm 0.0391$ and $E_{\text{MMLS}} = 0.57849 \rpm 0.027$.
As can be expected, there is no real difference in the approximation error of the two approaches and the computation time of PCA ($0.00348$ sec on the average) was significantly faster than that of the MMLS ($5.263$ sec on the average). As a sanity check, we computed the errors and statistics for the clean case as well (i.e., using $\{p_i\}$ instead of $\{r_i\}$) and both methods yielded exact reconstructions.

\subsubsection*{Case 2 -- Sphere setting}
\begin{itemize}
    \item Sample at random 100 vectors in $\RR^3$.
    \item Take their $z$ coordinate and replace it with the absolute value and normalize the vectors (this way we narrow the sphere to be a semi-sphere on the positive part of the space -- see Fig. \ref{fig:SemiSphere}).
    \item Compute the leading $d$ principal components of the data set (after subtracting the sample mean).
    \item Measure the PCA mean squared error by
    \begin{equation}
        E_{\text{PCA}} = \frac{1}{100}\sum_{i=1}^{100}\norm{P_{\text{PCA}}(r_i) - r_i}^2
    \end{equation}
    \item Compute the 1-degree and 2-degree MMLS approximation of all points (denoted by $P_{\text{MMLS}_m}(r_i)$ for all $i=1,...,100$ and $m=1,2$ denote the polynomial degreek).
    \item Measure the MMLS error by
    \[
        E_{\text{MMLS}_m} = \frac{1}{100}\sum_{i=1}^{100} \norm{P_{{\text{MMLS}_m}}(r_i) - r_i}^2
    ,\]
    where $m=1,2$ are the degrees of the local polynomial approximation.
    \item In order to obtain statistics, this experiment have been performed 50 times and we calculated the average errors and standard deviations.
\end{itemize}

\begin{figure}[ht]
\begin{centering}
\includegraphics[width={1\linewidth}]{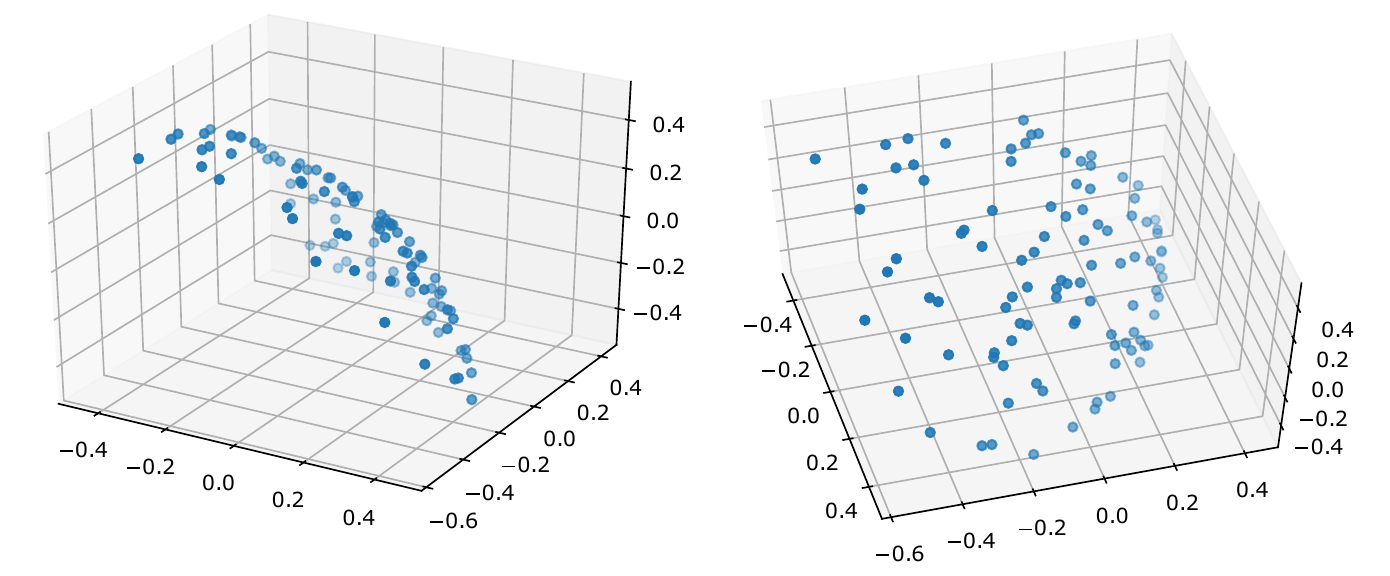}
\par\end{centering}
\caption{A specific sample of the semi-sphere after deducing the sample mean, as described in Case 2. The sample is shown from two different angles (left and right images). \label{fig:SemiSphere}}
\end{figure}

The experiment's results yielded $E_{\text{PCA}} = 0.78598 \rpm 0.00217$, $E_{\text{MMLS}_1} = 0.00337 \rpm 0.00079$ and $E_{\text{MMLS}_2} = 0 \rpm 0$ (up to machine precision).
Notice that the PCA reconstruction error is large, even though the data was sampled with no noise. This can be attributed to the non-linear nature of the manifold, which is not supposed to be well approximated by PCA. 
As can be seen, even the 1-degree MMLS improves significantly upon the PCA. It is not surprising that the error of the 2-degree MMLS yields zero error, as the sphere can be expressed as a second degree polynomial, locally.
The computation time of PCA ($0.00012$ sec on the average) was significantly faster than that of the MMLS ($2.42$ sec on the average).

\section*{Acknowledgements}
The authors wish to thank the referees as well as the journal's editor for their insightful remarks, which had an impact on the final version of paper.

\newpage
\subsection*{Appendix A - Geometrically Weighted PCA}
\label{sec:GeometricalPCA}
We wish to present here in our language the concept of \textit{geometrically weighted PCA} (presented a bit differently in \cite{harris2011geographicPCA}), as this concept plays an important role in some of the Lemmas proven in section \ref{sec:Theory} and even in the algorithm itself.

Given a set of $I$ vectors $x_1 ,..., x_I$ in $\mathbb{R}^n$, we look for a $Rank(d)$  projection $P \in \mathbb{R}^{n \times n}$ that minimizes:
\[
\sum\limits_{i=1}^I || Px_i - x_i ||_2^2
\]
If we denote by $A$ the matrix whose i'th column is $x_i$ then this is equivalent to minimizing:
\[
|| PA - A ||_F^2
,\] as the best possible $Rank(d)$ approximation to the matrix $A$ is the SVD $Rank(d)$ truncation denoted by $A_d$, we have:
\[
PA = P U \Sigma V^T = A_d = U \Sigma_d V^T
\]
\[
P = U \Sigma_d V^T V \Sigma^{-1} U^T
\]
\[
P = U \Sigma_d \Sigma^{-1} U^T
\]
\[
P = U I_d  U^T
\]
\[
P = U_d  U_d^T
\]
And this projection yields:
\begin{equation}
Px = U_d  U_d^T x = \sum\limits_{i=1}^d \langle x, u_i \rangle \cdot u_i
,\end{equation}
which is the orthogonal projection of $x$ onto $span \lbrace u_i \rbrace_{i=1}^d$. Here $u_i$ represents the i$^{th}$ column of the matrix $U$.

\begin{remark}
The projection $P$ is identically the projection induced by the PCA algorithm.
\end{remark}

\subsubsection*{The Weighted Projection:}
In this case, given a set of n vectors $x_1 ,..., x_I$ in $\mathbb{R}^n$, we look for a $Rank(d)$  projection $P \in \mathbb{R}^{n \times n}$ that minimizes:
\[
\sum\limits_{i=1}^I || Px_i - x_i ||_2^2 ~ \theta(||x_i - q||_2) =
\sum\limits_{i=1}^I || Px_i - x_i ||_2^2 ~ w_i
\] 
\[
= \sum\limits_{i=1}^I || \sqrt{w_i} Px_i - \sqrt{w_i} x_i ||_2^2 
\] 
\[
= \sum\limits_{i=1}^I || P  \sqrt{w_i} x_i - \sqrt{w_i} x_i ||_2^2 
\] 
\[
= \sum\limits_{i=1}^I || P y_i - y_i ||_2^2 
\] 
So if we define the matrix $\tilde{A}$ such that the i'th column of $\tilde{A}$ is the vector $y_i = \sqrt{w_i}x_i$ then we get the projection: 
\begin{equation}
P = \tilde{U}_d  \tilde{U}_d^T
\label{eq:geometricPCA}
,\end{equation}
where $\tilde{U}_d$ is the matrix containing the first $d$ principal components of the matrix $\tilde{A}$.

\bibliography{mybib}{}
\bibliographystyle{plain}

\end{document}